\documentclass[10pt,letterpaper,aps,prd,notitlepage,tightenlines,nofootinbib]{revtex4-1}
\usepackage[english]{babel}

\usepackage[left=1in,right=1in,top=1in,bottom=1in]{geometry}
\usepackage{microtype}
\usepackage[utf8]{inputenc}
\usepackage{amsmath}
\usepackage{amsfonts}
\usepackage{amssymb}
\usepackage{amsthm}
\usepackage{braket}
\usepackage{bbm}
\usepackage{hyperref}
\usepackage{graphicx}
\usepackage{booktabs}
\usepackage{capt-of}

\renewcommand{\vec}{\mathbf}

\newcommand{\transpose}{\mathrm{t}}
\newcommand{\sltr}{\mathfrak{sl}(2,\mathbb{R})}

\DeclareFontFamily{U}{mathx}{\hyphenchar\font45}
\DeclareFontShape{U}{mathx}{m}{n}{<-> mathx10}{}
\DeclareSymbolFont{mathx}{U}{mathx}{m}{n}
\DeclareMathAccent{\widebar}{0}{mathx}{"73}

\theoremstyle{plain}
\newtheorem{proposition}{Proposition}
\newtheorem*{theorem*}{Theorem}
\newtheorem*{remark}{Remark}

\newtheorem{appxprop}{Proposition}
\newtheorem{lemma}{Lemma}

\newtheorem{appxcoroll}{Corollary}

\begin{document}
\title{Wigner--Eckart theorem for the non-compact algebra \texorpdfstring{$\mathfrak{sl}(2,\mathbb{R})$}{sl(2,R)}}

\author{Giuseppe Sellaroli}\email{gsellaroli@uwaterloo.ca}
\affiliation{Department of Applied Mathematics, University of Waterloo\\Waterloo, Ontario, Canada}

\date{\today}

\begin{abstract}
The Wigner--Eckart theorem is a well known result for tensor operators of $\mathfrak{su}(2)$ and, more generally, any compact Lie algebra. In this paper the theorem will be generalized to the particular non-compact case of $\sltr$. In order to do so, recoupling theory between representations that are not necessarily unitary will be studied, namely between finite-dimensional and infinite-dimensional representations. As an application, the Wigner--Eckart theorem will be used to construct an analogue of the Jordan--Schwinger representation, previously known only for representations in the discrete class, which also covers the continuous class.
\end{abstract}

\maketitle

\section*{Introduction}

Representation theory of Lie groups and Lie algebras have many applications in physics, especially in quantum theory. In particular,  tensor operators have been useful for a long time in non-relativistic quantum mechanics \cite{messiah2}, and have more recently been introduced in loop quantum gravity \cite{girelli1}. A remarkable property of tensor operators for compact Lie algebras, encoded in what is known as \emph{Wigner--Eckart theorem} \cite{barut}, is that their matrix elements can be expressed as a product of a Clebsch--Gordan coefficient\footnote{The coefficients appearing in the decomposition of the tensor product of two irreducible representations as the direct sum of irreducible representations, also known as \emph{recoupling theory}.} and a factor that does not depend on the particular basis vectors.

Although compact Lie algebras are usually considered, one may want to investigate the non-compact case.
An analogue of the Wigner--Eckart theorem for non-compact algebras has already been considered \citep{locallycompact}, but only for infinite-dimensional tensor operators, i.e. with infinitely many components. Here finite-dimensional tensor operators will be considered. A general result for every non-compact Lie algebra is not available in this case; instead, the ``simple'' case of $\sltr$ will considered, with the hope that the techniques introduced can be used to study more general algebras. The theorem itself has a relatively simple proof; however, it relies on the knowledge of recoupling theory of finite-dimensional and infinite-dimensional representations. Finite-dimensional representations are, apart from the trivial one, always non-unitary, while infinite-dimensional one can be unitary or not. For this reason, before the theorem can be proved, these recouplings, which were previously unconsidered (to the best of the author's knowledge\footnote{With the exception of \cite{haruo}, which however only considers the discrete series representations and, even then, is lacking some results presented here.}), will be studied. Most of the paper will be dedicated to this task.

An application of the theorem in the $\sltr$ case will also be presented. An important result of $\mathfrak{su}(2)$ representation theory, especially useful in quantum field theory, is the Jordan--Schwinger representation, which consists of expressing the algebra generators in terms of two uncoupled \emph{quantum harmonic oscillator} operators. A similar result for $\sltr$ exists, but only for certain representations classes \cite{boson}. It will be shown that, making use of the Wigner--Eckart theorem, \emph{all} representation classes admit an analogous construction in terms of two tensor operators, which reduce to the usual Jordan--Schwinger representation where the latter is defined.

The paper is organized as follows: section 1 is a review of preliminary notions, i.e. $\sltr$ representation theory and tensor operators. The main results of the paper are presented in section 2: recoupling theory between finite and infinite-dimensional representations, Wigner--Eckart theorem and Jordan--Schwinger representation. Finally a table of notations used in the paper, some results needed in section 2 and a table of Clebsch--Gordan coefficients for the coupling of finite and infinite-dimensional representations are included as appendices.

\section{Preliminary notions}
This section contains a review of some notions that will be used in the main part of the paper. Firstly the representation theory of $\sltr$ will be recalled, then tensor operators will be defined for a generic Lie algebra and, in particular, for $\sltr$. The review of representation theory follows \cite{Bargmann1947,harishchandra,howe1992non}, albeit with different notations and conventions. References for tensor operators can be found in \cite{jeevanjee2011,barut}.
\subsection{Irreducible representations of \texorpdfstring{$\mathfrak{sl}(2,\mathbb{R})$}{sl(2,R)}}

The $3$-dimensional real Lie algebra $\mathfrak{sl}(2,\mathbb{R})$ is the algebra of traceless $2\times 2$ real matrices; it is isomorphic to the real algebras $\mathfrak{spin}(2,1)$, $\mathfrak{so}(2,1)$, $\mathfrak{sp}(2,\mathbb{R})$ and $\mathfrak{su}(1,1)$. The non-standard basis
\begin{equation}
X_0=\tfrac{1}{2}
\begin{pmatrix}
0 & 1\\
-1 & 0
\end{pmatrix},
\quad X_1=\tfrac{1}{2}
\begin{pmatrix}
1 & & 0 \\
0 & & -1
\end{pmatrix},
\quad X_2=\tfrac{1}{2}
\begin{pmatrix}
0 && 1\\
1 && 0
\end{pmatrix}
\end{equation}
will be used here, with commutation relations
\begin{equation}
[X_0,X_1]=-X_2,\quad [X_1,X_2]=X_0,\quad [X_2,X_0]=-X_1.
\end{equation}
The Casimir operator is given in this basis by
\begin{equation}
\label{eq:casimir}
Q=(X_0)^2 -(X_1)^2 -(X_2)^2.
\end{equation}
For consistency with the literature on the subject, the usual physicist convention of acting on complex representations with complexified generators will be used. Explicitly, the new generators are
\begin{equation}\label{eq:phys_generators}
J_0:=-iX_0,\quad J_\pm:=-iX_1\pm X_2
\end{equation}
with commutation relations
\begin{equation}
[J_0,J_\pm]=\pm J_\pm,\quad[J_+,J_-]=-2 J_0,
\end{equation}
and the Casimir is given by
\begin{equation}
Q = -J_0(J_0+1)+J_-J_+\equiv-J_0(J_0-1)+J_+J_-.
\end{equation}
They act on complex irreducible representations (not necessarily unitary) as
\begin{equation}
\label{eq:reps}
\begin{cases}
J_0\ket{j,m}=m\ket{j,m}\\
J_\pm\ket{j,m}=C_\pm(j,m)\ket{j,m\pm 1}\\
Q\ket{j,m}=-j(j+1)\ket{j,m},
\end{cases}
\end{equation}
where
\begin{equation}
C_\pm(j,m):=i\sqrt{j \mp m}\sqrt{j \pm m + 1}.
\end{equation}
The vectors $\ket{j,m}$ form an orthonormal basis for the vector space of the representation, with $j$ being a label for the representation and $m$ enumerating the vectors; their possible values depend on the representation class, which can be one of the following:
\begin{itemize}
\item {\bf Positive discrete series} $D^+_j$: infinite-dimensional lowest weight\footnote{A lowest (highest) weight representation is, in this context, one for which $m$ has a lower (upper) bound.} representations, with
\begin{equation*}
j\in\left\{-\tfrac{1}{2},0,\tfrac{1}{2},1,\dotsc\right\} \quad \mbox{and}\quad m\in\left\{ j+1,j+2,j+3,\dotsc\right\}.
\end{equation*}
\item {\bf Negative discrete series} $D^-_j$: infinite-dimensional highest weight representations, with
\begin{equation*}
j\in\left\{-\tfrac{1}{2},0,\tfrac{1}{2},1,\dotsc\right\} \quad \mbox{and}\quad m\in\left\{-j-1,-j-2,-j-3,\dotsc\right\}.
\end{equation*}
\item {\bf Continuous series} $C_j^\varepsilon$: infinite-dimensional representations of \emph{parity} $\varepsilon\in\left\{0,\tfrac{1}{2}\right\}$, with
\begin{equation*}
m\in\varepsilon+\mathbb{Z}\quad\mbox{and}\quad j\in\mathbb{C};
\end{equation*}
when $j$ is (half-)integer, there is the additional constraint
\begin{equation*}
j-\varepsilon\not\in\mathbb{Z}.
\end{equation*}
Moreover, the representations $C^\varepsilon_j$ and $ C^\varepsilon_{-j-1}$ are isomorphic.
\item {\bf Finite-dimensional series} $F_j$: isomorphic to the representations of $\mathfrak{su}(2)$, with
\begin{equation*}
j\in\left\{0,\tfrac{1}{2},1,\dotsc\right\}\quad\mbox{and}\quad m\in\left\{-j,-j+1,\dotsc,j-1,j\right\}.
\end{equation*}
They are the only finite-dimensional representations, with dimension $2j+1$.
\end{itemize}
Of these representations, the only unitary ones are the whole discrete (positive and negative) series, the continuous series with
\begin{equation}
\begin{cases}
j\in \left(-1,0\right)\quad\mbox{or}\quad j\in\left\{-\tfrac{1}{2}+is \left|\right. s\neq0 \right\}\quad & \mbox{if}\quad \varepsilon=0\\
j\in \left\{-\tfrac{1}{2}+is \left|\right. s\neq0 \right\}\quad & \mbox{if}\quad \varepsilon=\tfrac{1}{2}
\end{cases}
\end{equation}
and, among the finite-dimensional ones, only the \emph{trivial representation} $F_0$.
\begin{remark}\label{rem:rep_group}
These representations can be integrated to representations of the group $\mathrm{SL}(2,\mathbb{R})$. In this case, the only ones appearing in the \emph{Plancherel decomposition} are the discrete ones with $j\geq 0$ and the continuous ones with $j\in\left\{-\tfrac{1}{2}+is \left|\right. s\neq0 \right\}$ (see \cite{harishchandra} for details).
\end{remark}

\subsection{Tensor operators for \texorpdfstring{$\mathfrak{sl}(2,\mathbb{R})$}{sl(2,R)}}
Tensor operators are a particular class of operators that transform as vectors in a representation of a Lie algebra under the ``action'' of the algebra generators. Explicitly, let
\begin{equation}
\rho_i:\mathfrak{g}\rightarrow \mathfrak{gl}(V_i),\qquad i=1,2
\end{equation}
be two representations of a Lie algebra $\mathfrak{g}$. One can always associate to them a new representation
\begin{equation}
R:\mathfrak{g}\rightarrow \mathfrak{gl}(\mathrm{Lin}(V_1, V_2))
\end{equation}
defined by
\begin{equation}
\label{eq:rep_operator}
R(X)\,A=\rho_2(X)\,A -A\,\rho_1(X),\qquad \forall A\in \mathrm{Lin}(V_1, V_2),\quad \forall X\in \mathfrak{g}.
\end{equation}
A tensor operator $T$ is an intertwiner between some representation
\begin{equation}
\rho_0:\mathfrak{g}\rightarrow \mathfrak{gl}(V_0)
\end{equation}
and $R$, i.e. a linear map
\begin{equation}
T:V_0\rightarrow \mathrm{Lin}(V_1,V_2)
\end{equation}
such that
\begin{equation}
T\circ\rho_0(X) = R(X)\circ T,\qquad \forall X\in \mathfrak{g}.
\end{equation}
If $\rho_0$ is irreducible, $T$ is called an irreducible tensor operator. As usual with linear maps, the components of a linear operator in a given basis are defined by its evaluation on the basis vectors.

\begin{remark}[Extension to the group]\label{rem:group_extension}
Tensor operators can also be defined for Lie group representations, in a similar way. Whenever the algebra representations $\rho_0$, $\rho_1$ and $\rho_2$ are also group representations, the two definitions are equivalent \cite{barut}.
\end{remark}

In the specific case of $\mathfrak{sl}(2,\mathbb{R})$, one says a tensor operator $T^\gamma$ is of rank $\gamma$ if $\rho_0$ is the finite-dimensional representation $F_\gamma$. Its components in the standard basis (\ref{eq:reps}) are given by
\begin{equation}
T^\gamma_\mu:=T^\gamma \left(\ket{\gamma,\mu}\right)\in\mathrm{Lin}(V_1,V_2),\qquad \mu\in\left\{-\gamma,\dotsc,\gamma \right\}.
\end{equation}
Owing to (\ref{eq:rep_operator}), they satisfy for all $X\in\mathfrak{sl}(2,\mathbb{R})$
\begin{equation}
R(X)\,T^\gamma_\mu \equiv \rho_2(X)\,T^\gamma_\mu - T^\gamma_\mu\,\rho_1(X)=\sum_{\nu=-\gamma}^\gamma \braket{\gamma,\nu|F_\gamma(X)|\gamma,\mu}T^\gamma_\nu.
\end{equation}
In terms of the algebra generators, this can be written in the compact form
\begin{equation}\label{eq:tenso_op_def}
[J_0,T^\gamma_\mu]=\mu\,T^\gamma_\mu ,\qquad[J_\pm,T^\gamma_\mu]=C_\pm(\gamma,\mu)\,T^\gamma_{\mu\pm1}.
\end{equation}

\begin{remark}[Infinite-dimensional tensor operators]\label{rem:infinite_t_op}
Although out the scope of this paper, one could also consider tensor operators where $\rho_0$ is an infinite-dimensional representation. In particular, when the representation is unitary, it can be proven that, even for non-compact groups (and hence algebras) the Wigner--Eckart theorem holds \cite{locallycompact}.
\end{remark}

\section{Wigner--Eckart theorem}
This section contains the main result of the paper, that is the Wigner--Eckart theorem for $\sltr$. The theorem roughly states that the matrix elements of a tensor operators between two representations (of any class) are heavily constrained by the way the finite-dimensional representation which the operator transforms like couples with the one on which it is acting.

Coupling of two finite-dimensional representations are known, as they behave exactly like $\mathfrak{su}(2)$ representations. In order to prove the theorem, couplings of finite-dimensional representations and infinite-dimensional will be studied here.

The section is organized as follows: first the couplings $F_\gamma\otimes D^+_j$ and $F_\gamma\otimes C^\varepsilon_j$ will be discussed, then the Wigner--Eckart theorem will be properly stated and proved. Lastly, as an application, the theorem will be used to generalize the Jordan--Schwinger representation of $\mathfrak{su}(2)$ to the non-compact $\sltr$.

\subsection{Coupling of finite and discrete representations}\label{sec:fin-dis}
Consider the coupling $F_\gamma\otimes D^+_j$ of a finite-dimensional representation and one from the discrete positive series, with $\gamma\geq\frac{1}{2}$. The generators of $\mathfrak{sl}(2,\mathbb{R})$ act on this representation as
\begin{equation}
\mathcal{J}_0:= J_0\otimes\mathbbm{1} + \mathbbm{1}\otimes J_0,\quad \mathcal{J}_\pm:= J_\pm\otimes\mathbbm{1} + \mathbbm{1}\otimes J_\pm,
\end{equation}
with total Casimir
\begin{equation}\label{eq:coupling_casimir}
\mathcal{Q}:=-\mathcal{J}_0(\mathcal{J}_0+1) + \mathcal{J}_-\mathcal{J}_+ \equiv -\mathcal{J}_0(\mathcal{J}_0-1) + \mathcal{J}_+\mathcal{J}_-.
\end{equation}
\begin{remark}\label{rem:discrete+-}
The discrete negative representation $D^-_j$ is the \emph{dual representation} to $D^+_j$, i.e. they are related by the change
\begin{equation*}
J_0\rightarrow -J_0,\quad J_\pm \rightarrow -J_\mp,\quad \ket{j,m}\rightarrow (-1)^m\ket{j,-m}.
\end{equation*}
Conversely, $F_\gamma$ is dual to itself, i.e. it remains unchanged under the same change.
For this reason, the results in this section will be proved for $D^+_j$ only: the analogues for the negative representation trivially follow by transforming operators and vectors for both the finite and the discrete series.
\end{remark}
Such a representation is not generally irreducible. One is interested in finding, if it exists, the decomposition of $F_\gamma\otimes D^+_j$ in terms of irreducible representations of $\mathfrak{sl}(2,\mathbb{R})$, a non-trivial task since the representation is not unitary. Algebraically, this is equivalent to diagonalizing (if possible) the Casimir $\mathcal{Q}$.
Solving the eigenvalue equation for generic $\gamma$ is not easy; instead, the approach will be to explicitly find the eigenvectors and then show that, under certain conditions, they provide a basis for the product space.

To avoid confusion, the basis elements of the finite-dimensional series will be denoted by
\begin{equation}
\ket{\gamma,\mu},\quad \mu\in\left\{ -\gamma,\dotsc,\gamma\right\}
\end{equation}
from now on. Since both $F_\gamma$ and $D^+_j$ are lowest weight representations, i.e. $J_-$ annihilates one of their basis elements, their tensor product has to be as well. In fact, the vector
\begin{equation}
\ket{\psi_{(-\gamma)}}:=\ket{\gamma,-\gamma}\otimes\ket{j,j+1}
\end{equation}
satisfies
\begin{equation}
\mathcal{J}_-\ket{\psi_{(-\gamma)}}=0.
\end{equation}
An element of $F_\gamma\otimes D^+_j$ satisfying this property will be called a \emph{lowest weight vector}. $\ket{\psi_{(-\gamma)}}$ is trivially a $\mathcal{Q}$-eigenvector: from (\ref{eq:coupling_casimir}) follows that
\begin{equation}
\mathcal{Q}\ket{\psi_{(-\gamma)}}=-\mathcal{J}_0(\mathcal{J}_0 -1)\ket{\psi_{(-\gamma)}}=-(j-\gamma)(j-\gamma +1)\ket{\psi_{(-\gamma)}},
\end{equation}
since
\begin{equation}
\mathcal{J}_0\ket{\gamma,\mu}\otimes\ket{j,m}=(m+\mu)\ket{\gamma,\mu}\otimes\ket{j,m}.
\end{equation}

This is not the only lowest weight vector; in fact, one has

\begin{proposition}\label{prop:lowest_weight}
For the coupling $F_\gamma\otimes D^+_j$, the vectors
\begin{equation*}
\ket{\psi_{(\mu)}}=\sum_{\nu=-\gamma}^{\mu}(-1)^{\gamma+\nu}\prod_{\sigma=-\gamma}^{\nu-1}\frac{C_+(j,j+\mu-\sigma)}{C_+(\gamma,\sigma)}\ket{\gamma,\,\nu}\otimes\ket{j,\,j+1+\mu-\nu},
\end{equation*}
with $\mu\in\left\{ -\gamma,\dotsc,\gamma\right\}$ are lowest weight vectors and $\mathcal{Q}$-eigenvectors, with respective eigenvalues
\begin{equation*}
q_{(\mu)}:=-(j+\mu)(j+\mu+1).
\end{equation*}
\end{proposition}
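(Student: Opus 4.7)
The plan is to verify in sequence that $\ket{\psi_{(\mu)}}$ is (i) a $\mathcal{J}_0$-eigenvector, (ii) annihilated by $\mathcal{J}_-$, and then to deduce (iii) the Casimir eigenvalue directly from (i) and (ii) via the second form of the Casimir in (\ref{eq:coupling_casimir}). Point (i) is immediate: each summand $\ket{\gamma,\nu}\otimes\ket{j,j+1+\mu-\nu}$ has total $\mathcal{J}_0$-weight $\nu+(j+1+\mu-\nu)=j+1+\mu$, independent of $\nu$, so $\mathcal{J}_0\ket{\psi_{(\mu)}}=(j+1+\mu)\ket{\psi_{(\mu)}}$.

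For (ii), I would expand $\mathcal{J}_-=J_-\otimes\mathbbm{1}+\mathbbm{1}\otimes J_-$ and act term by term. Each summand of $\ket{\psi_{(\mu)}}$ splits into a $J_-$-on-the-left contribution proportional to $C_-(\gamma,\nu)\ket{\gamma,\nu-1}\otimes\ket{j,j+1+\mu-\nu}$ and a $J_-$-on-the-right contribution proportional to $C_-(j,j+1+\mu-\nu)\ket{\gamma,\nu}\otimes\ket{j,j+\mu-\nu}$. After reindexing the first batch ($\nu\mapsto\nu+1$), both contributions land on the common basis vector $\ket{\gamma,\nu}\otimes\ket{j,j+\mu-\nu}$, and, pulling out the shared prefactor
\begin{equation*}
(-1)^{\gamma+\nu}\prod_{\sigma=-\gamma}^{\nu-1}\frac{C_+(j,j+\mu-\sigma)}{C_+(\gamma,\sigma)},
\end{equation*}
the cancellation reduces to the single algebraic identity
\begin{equation*}
\frac{C_+(j,j+\mu-\nu)}{C_+(\gamma,\nu)}\,C_-(\gamma,\nu+1)=C_-(j,j+1+\mu-\nu),
\end{equation*}
which, with $C_\pm(\ell,k)=i\sqrt{\ell\mp k}\sqrt{\ell\pm k+1}$, simplifies on both sides to $i\sqrt{\nu-\mu}\sqrt{2j+1+\mu-\nu}$. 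The reindexing also produces boundary contributions at $\nu=-\gamma-1$ and $\nu=\mu$ (corresponding to out-of-range kets $\ket{\gamma,-\gamma-1}$ and $\ket{j,j}$), but they come weighted by the vanishing factors $C_-(\gamma,-\gamma)=0$ and $C_-(j,j+1)=0$ respectively, so they disappear automatically without the need for any ad hoc truncation.

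Finally, for (iii), applying $\mathcal{Q}=-\mathcal{J}_0(\mathcal{J}_0-1)+\mathcal{J}_+\mathcal{J}_-$ kills the $\mathcal{J}_+\mathcal{J}_-$ term by (ii), leaving $\mathcal{Q}\ket{\psi_{(\mu)}}=-(j+1+\mu)(j+\mu)\ket{\psi_{(\mu)}}=q_{(\mu)}\ket{\psi_{(\mu)}}$. The only non-routine part of the argument is the index-bookkeeping in the $\mathcal{J}_-$ computation together with the careful tracking of the boundary indices; the underlying identity on the $C_\pm$ coefficients, which is really what drives the whole result, is a one-line check.
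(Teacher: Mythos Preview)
Your proof is correct and follows essentially the same route as the paper's: act with $\mathcal{J}_-$, reindex one of the two resulting sums, and observe termwise cancellation (the paper phrases this as the sign cancellation $(-1)^{\gamma+\nu+1}+(-1)^{\gamma+\nu}=0$ after invoking $C_-(\ell,k)=C_+(\ell,k-1)$, which is precisely your identity), then read off the Casimir eigenvalue from $\mathcal{Q}=-\mathcal{J}_0(\mathcal{J}_0-1)+\mathcal{J}_+\mathcal{J}_-$. The one small point the paper makes explicit that you omit is that each $\ket{\psi_{(\mu)}}$ is nonzero (the $\nu=-\gamma$ summand has coefficient $1$), which is needed for the word ``eigenvector'' to be meaningful.
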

\begin{proof}
First notice that each $\ket{\psi_{(\mu)}}$ is non-vanishing. Acting with $\mathcal{J}_-$, one gets
\begin{equation}
\begin{split}
\mathcal{J}_-\ket{\psi_{(\mu)}}=&\sum_{\nu=-\gamma}^{\mu}(-1)^{\gamma+\nu}\prod_{\sigma=-\gamma}^{\nu-1}\frac{C_+(j,j+\mu-\sigma)}{C_+(\gamma,\sigma)}C_+(\gamma,\nu-1)\ket{\gamma,\,\nu-1}\otimes\ket{j,\,j+1+\mu-\nu}+\\
&\sum_{\nu=-\gamma}^{\mu}(-1)^{\gamma+\nu}\prod_{\sigma=-\gamma}^{\nu-1}\frac{C_+(j,j+\mu-\sigma)}{C_+(\gamma,\sigma)}C_+(j,\,j+\mu-\nu)\ket{\gamma,\,\nu}\otimes\ket{j,\,j+\mu-\nu},
\end{split}
\end{equation}
where the property
\begin{equation}\label{eq:C+-}
C_+(j,m-1)=C_-(j,m)
\end{equation}
was used. Relabelling the dummy index $\nu$ in the first sum and noticing that the term $\nu=\mu$ vanishes in the second one, this can be rewritten as
\begin{equation}
\sum_{\nu=-\gamma}^{\mu-1}\left[(-1)^{\gamma+\nu+1}+(-1)^{\gamma+\nu}\right]\prod_{\sigma=-\gamma}^{\nu-1}\frac{C_+(j,j+\mu-\sigma)}{C_+(\gamma,\sigma)}
C_+(j,\,j+\mu-\nu)\ket{\gamma,\,\nu}\otimes\ket{j,\,j+\mu-\nu}=0.
\end{equation}
Again, the action of the Casimir is trivially given by
\begin{equation}
\mathcal{Q}\ket{\psi_{(\mu)}}=-\mathcal{J}_0(\mathcal{J}_0 -1)\ket{\psi_{(\mu)}}=-(j+\mu)(j+\mu +1)\ket{\psi_{(\mu)}}.
\end{equation}
\end{proof}

The fact that a finite number of eigenvectors exist does not mean $\mathcal{Q}$ is diagonalizable. Instead of working in an infinite-dimensional setting, however, one can take advantage of the tensor product basis vectors of $F_\gamma\otimes D^+_j$ being $\mathcal{J}_{0}$-eigenvectors: the space can be decomposed as
\begin{equation}
F_\gamma\otimes D^+_j=\bigoplus_{M=j+1-\gamma}^\infty V_M,
\end{equation}
where the $V_M$ are the orthogonal subspaces spanned by
\begin{equation}
\ket{(\mu)M}:=\ket{\gamma,\mu}\otimes\ket{j,M-\mu},\quad \mu\in\left\{-\gamma,\dotsc,\min(\gamma, M-j-1) \right\}.
\end{equation}
Each $V_M$ is finite-dimensional and, since $[\mathcal{Q},\mathcal{J}_0]=0$, one can work with the restriction $\mathcal{Q}_M:=\mathcal{Q}|_{V_M}$, satisfying
\begin{equation}
\mathcal{Q}_M(V_M)\subseteq V_M.
\end{equation}
The total Casimir $\mathcal{Q}$ will be diagonalizable if and only if each $\mathcal{Q}_M$ is, with eigenvalues not depending on $M$ and such that, for each $M$, the eigenvalues of $\mathcal{Q}_M$ are also eigenvalues of $\mathcal{Q}_{M+1}$.

In order to prove whether $\mathcal{Q}$ is diagonalizable or not and under which conditions, the following two lemmas will be needed.

\begin{lemma}\label{lem:repeated_J+}
If $j>\gamma-1$, then the repeated action of $\mathcal{J}_+$ on a lowest weight vector never vanishes; that is, for every $\mu$,
\begin{equation*}
(\mathcal{J}_+)^n\ket{\psi_{(\mu)}}\neq0\quad\forall n\in\mathbb{N}.
\end{equation*}
\end{lemma}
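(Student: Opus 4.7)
The plan is to run an induction on $n$ using the fact that on a simultaneous eigenvector of $\mathcal{Q}$ and $\mathcal{J}_0$ the product $\mathcal{J}_-\mathcal{J}_+$ acts as an explicit scalar, read off from the relation $\mathcal{Q}=-\mathcal{J}_0(\mathcal{J}_0+1)+\mathcal{J}_-\mathcal{J}_+$ in (\ref{eq:coupling_casimir}). Set $\ket{\phi_n}:=(\mathcal{J}_+)^n\ket{\psi_{(\mu)}}$; the goal is to show $\ket{\phi_n}\neq 0$ for every $n\in\mathbb{N}$.

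First I would record the weight and Casimir eigenvalues of $\ket{\phi_n}$. Every term in the sum defining $\ket{\psi_{(\mu)}}$ lies in the weight space $\mathcal{J}_0=j+\mu+1$, so $\mathcal{J}_0\ket{\psi_{(\mu)}}=(j+\mu+1)\ket{\psi_{(\mu)}}$. Combined with $[\mathcal{J}_0,\mathcal{J}_+]=\mathcal{J}_+$ and $[\mathcal{Q},\mathcal{J}_+]=0$, this yields
\begin{equation*}
\mathcal{J}_0\ket{\phi_n}=(j+\mu+1+n)\ket{\phi_n},\qquad \mathcal{Q}\ket{\phi_n}=-(j+\mu)(j+\mu+1)\ket{\phi_n}
\end{equation*}
for all $n\geq 0$ (whenever $\ket{\phi_n}$ is nonzero, this is an eigenvector statement).

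Next, applying $\mathcal{J}_-\mathcal{J}_+=\mathcal{Q}+\mathcal{J}_0(\mathcal{J}_0+1)$ to $\ket{\phi_n}$ and simplifying gives
\begin{equation*}
\mathcal{J}_-\ket{\phi_{n+1}}=\mathcal{J}_-\mathcal{J}_+\ket{\phi_n}=(n+1)\bigl(2(j+\mu)+n+2\bigr)\ket{\phi_n}.
\end{equation*}
The induction is now immediate: $\ket{\phi_0}=\ket{\psi_{(\mu)}}\neq 0$ by Proposition \ref{prop:lowest_weight}, and if $\ket{\phi_n}\neq 0$, then as long as the scalar $(n+1)\bigl(2(j+\mu)+n+2\bigr)$ is nonzero, the right-hand side is nonzero, forcing $\ket{\phi_{n+1}}\neq 0$.

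This is the only place where the hypothesis enters: $n+1\geq 1$ is automatic, and the assumption $j>\gamma-1$ combined with the smallest admissible $\mu=-\gamma$ gives $j+\mu>-1$, hence $2(j+\mu)+n+2>n\geq 0$ for every $n\in\mathbb{N}$ and every $\mu\in\{-\gamma,\dotsc,\gamma\}$. There is no real obstacle to the argument; it is a standard ladder-operator computation, and the only care needed is to verify that the Casimir and weight eigenvalues of $\ket{\psi_{(\mu)}}$ conspire, under the stated bound on $j$, to keep the recursion coefficient strictly positive.
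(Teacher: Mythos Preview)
Your argument is correct and is essentially the same as the paper's: both compute the Casimir and $\mathcal{J}_0$ eigenvalues of $(\mathcal{J}_+)^n\ket{\psi_{(\mu)}}$ and use $\mathcal{J}_-\mathcal{J}_+=\mathcal{Q}+\mathcal{J}_0(\mathcal{J}_0+1)$ to reduce the question to the nonvanishing of the scalar $n(n+2j+2\mu+1)$, which the hypothesis $j>\gamma-1$ guarantees. The only cosmetic difference is that you run a direct induction while the paper phrases it as a proof by contradiction (take the minimal $n$ with $\ket{\phi_n}=0$); your shifted coefficient $(n+1)(2(j+\mu)+n+2)$ becomes exactly the paper's $n(n+2j+2\mu+1)$ upon reindexing.
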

\begin{proof}
Suppose the lemma is not true for an arbitrary $\mu$, and let $n\geq 1$ be the smallest integer such that
\begin{equation}
(\mathcal{J}_+)^n\ket{\psi_{(\mu)}}=0.
\end{equation}
One has $(\mathcal{J}_+)^{n-1}\ket{\psi_{(\mu)}}\neq 0$ and, since $\mathcal{Q}$ and $\mathcal{J}_+$ commute,
\begin{equation}
\mathcal{Q}(\mathcal{J}_+)^{n-1}\ket{\psi_{(\mu)}}=(\mathcal{J}_+)^{n-1}\mathcal{Q}\ket{\psi_{(\mu)}}=q_{(\mu)}(\mathcal{J}_+)^{n-1}\ket{\psi_{(\mu)}}.
\end{equation}
On the other hand
\begin{equation}
\mathcal{Q}(\mathcal{J}_+)^{n-1}\ket{\psi_{(\mu)}}=-\mathcal{J}_0(\mathcal{J}_0 + 1)(\mathcal{J}_+)^{n-1}\ket{\psi_{(\mu)}}+\mathcal{J_-}(\mathcal{J}_+)^{n}\ket{\psi_{(\mu)}}=q_{(\mu+n)}(\mathcal{J}_+)^{n-1}\ket{\psi_{(\mu)}},
\end{equation}
since
\begin{equation}
(\mathcal{J}_+)^{n-1}\ket{\psi_{(\mu)}}\in V_{j+\mu+n}.
\end{equation}
This is only possible if $q_{(\mu)}=q_{(\mu+n)}$, that is
\begin{equation}
(j+\mu)(j+\mu+1)=(j+\mu+n)(j+\mu+n+1),
\end{equation}
which is equivalent to
\begin{equation}
n(n+2j+2\mu+1)=0.
\end{equation}
However, since $\mu\geq-\gamma$ and $j>\gamma-1$, one has
\begin{equation}
\begin{cases}
n\geq 1\\
n+2j+2\mu+1> 1+2(\gamma - 1) -2\gamma +1= 0,
\end{cases}
\end{equation}
which leads to a contradiction.
\end{proof}

\begin{lemma}\label{lem:distinct_eigenvalues}
The values
\begin{equation*}
q_{(\mu)}=-(j+\mu)(j+\mu+1),\quad \mu\in\left\{ -\gamma,\dotsc,\gamma\right\},\quad j\in\mathbb{C}
\end{equation*}
are all distinct if and only if
\begin{equation*}
j\not\in\mathbb{Z}/2\quad\mbox{or}\quad
\begin{cases}
j\in\mathbb{Z}/2\\
j\in (-\infty,-\gamma)\cup(\gamma-1,\infty).
\end{cases}
\end{equation*}
\end{lemma}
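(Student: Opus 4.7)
The plan is to directly characterize when two eigenvalues can coincide, by reducing the equation $q_{(\mu)} = q_{(\nu)}$ to a simple algebraic condition on $j$ and then analyzing which values of $j$ in $\mathbb{C}$ can be achieved by distinct admissible $\mu,\nu$.

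First I would compute the difference $q_{(\mu)} - q_{(\nu)}$. Using the identity
\begin{equation*}
(j+\mu)(j+\mu+1) - (j+\nu)(j+\nu+1) = (\mu-\nu)\bigl(2j+\mu+\nu+1\bigr),
\end{equation*}
it follows immediately that for $\mu\neq\nu$ one has $q_{(\mu)} = q_{(\nu)}$ if and only if $j = -\tfrac{\mu+\nu+1}{2}$. So a collision among the eigenvalues occurs exactly when $j$ lies in the set
\begin{equation*}
S_\gamma := \left\{\, -\tfrac{\mu+\nu+1}{2} \,\Big|\, \mu,\nu\in\{-\gamma,\dotsc,\gamma\},\ \mu\neq\nu \,\right\},
\end{equation*}
and the lemma amounts to showing $S_\gamma = \mathbb{Z}/2 \cap [-\gamma,\gamma-1]$.

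Next I would determine $S_\gamma$ explicitly. Since both $\mu$ and $\nu$ lie in $\{-\gamma,-\gamma+1,\dotsc,\gamma\}$, their sum $\mu+\nu$ is always an integer (the two shifts by $\gamma$ cancel), so every element of $S_\gamma$ lies in $\mathbb{Z}/2$. The extreme values of $\mu+\nu$ subject to $\mu\neq\nu$ are $(-\gamma)+(-\gamma+1)=-2\gamma+1$ and $\gamma+(\gamma-1)=2\gamma-1$, and as $\mu+\nu$ runs over every integer in between (easy to realize by an explicit choice), $j = -\tfrac{\mu+\nu+1}{2}$ runs over every half-integer step from $\gamma-1$ down to $-\gamma$. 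Hence $S_\gamma = \{-\gamma,\,-\gamma+\tfrac12,\,\dotsc,\,\gamma-1\}$.

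Taking the complement in $\mathbb{C}$ then yields the statement of the lemma: the $q_{(\mu)}$ are all distinct iff $j\notin\mathbb{Z}/2$, or $j\in\mathbb{Z}/2$ but $j<-\gamma$ or $j>\gamma-1$. I do not expect a real obstacle here; the only point requiring mild care is checking that every integer value between $-2\gamma+1$ and $2\gamma-1$ is actually attained as $\mu+\nu$ with $\mu\neq\nu$, which is immediate by a constructive choice (e.g.\ fix $\nu=-\gamma$ or $\nu=\gamma$ and let the other index slide).
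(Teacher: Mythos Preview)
Your proposal is correct and follows essentially the same route as the paper: factor $q_{(\mu)}-q_{(\nu)}=-(\mu-\nu)(2j+\mu+\nu+1)$, reduce to $\mu+\nu=-2j-1$, observe that $\mu+\nu$ is always an integer with $|\mu+\nu|<2\gamma$ when $\mu\neq\nu$, and read off the range of admissible $j$. Your version is slightly more explicit in verifying that every intermediate integer value of $\mu+\nu$ is actually attained, but otherwise the arguments are identical.
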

\begin{proof}
Consider arbitrary $\mu\neq\nu$. One can easily check that
\begin{equation}
q_{(\mu)}=q_{(\nu)}\Leftrightarrow (\mu-\nu)(\mu+\nu +2j+1)=0.
\end{equation}
Since $\mu$ and $\nu$ are different, this is equivalent to solving
\begin{equation}
\mu+\nu = -2j-1.
\end{equation}
The l.h.s. is an integer number, so if $j\not\in\mathbb{Z}/2$ there is no solution, i.e. the $q_{\mu}$'s are all different.

Suppose now that $j\in\mathbb{Z}/2$. The l.h.s. is subject to the constraint (remember $\mu\neq\nu$)
\begin{equation}
|\mu+\nu|<2\gamma,
\end{equation}
so that a solution exists if and only if
\begin{equation}
|2j+1|<2\gamma.
\end{equation}
Since $j$ can only change by half-integer steps, it follows that coinciding $q_{(\mu)}$'s exist if and only if $j\leq \gamma - 1$ and $j\geq-\gamma$. Consequently, they are all different if and only if $j>\gamma-1$ or $j<-\gamma$.
\end{proof}

It is now possible to prove the diagonalizability of $\mathcal{Q}$. One has the positive result:
\begin{proposition}\label{prop:discrete_diagonalizable}
When $j> \gamma -1$, the operator $\mathcal{Q}_M$ is diagonalizable, with distinct eigenvalues
\begin{equation*}
q_{(\mu)}=-(j+\mu)(j+\mu+1),\quad \mu\in\left\{ -\gamma,\dotsc,\min(\gamma,M-j-1)\right\}
\end{equation*}
independent of $M$.
\end{proposition}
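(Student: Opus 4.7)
The plan is to construct enough explicit eigenvectors of $\mathcal{Q}_M$ in $V_M$ to span the space, using the lowest weight vectors from Proposition \ref{prop:lowest_weight} together with repeated applications of $\mathcal{J}_+$, and then invoke Lemma \ref{lem:distinct_eigenvalues} to show the eigenvalues obtained are all distinct (so the eigenvectors are automatically linearly independent and fill $V_M$ by dimension count).

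First I would compute $\dim V_M = \min(\gamma, M-j-1) + \gamma + 1$ directly from the defining basis $\{\ket{(\mu)M}\}$. Next, for each admissible $\mu \in \{-\gamma, \dotsc, \min(\gamma, M-j-1)\}$, I would note that $\ket{\psi_{(\mu)}}$ lives in $V_{j+1+\mu}$ (since $\mathcal{J}_0 \ket{\psi_{(\mu)}} = (j+1+\mu)\ket{\psi_{(\mu)}}$) and that $M - (j+1+\mu) \geq 0$ under our constraints. Thus I can form
\begin{equation*}
\ket{\Psi^M_{(\mu)}} := (\mathcal{J}_+)^{M-j-1-\mu}\ket{\psi_{(\mu)}} \in V_M.
\end{equation*}
Since $j > \gamma - 1$, Lemma \ref{lem:repeated_J+} guarantees that $\ket{\Psi^M_{(\mu)}}$ is non-zero, while the commutation $[\mathcal{Q},\mathcal{J}_+]=0$ immediately gives $\mathcal{Q}\,\ket{\Psi^M_{(\mu)}} = q_{(\mu)} \ket{\Psi^M_{(\mu)}}$, i.e.\ an eigenvector with an eigenvalue that depends only on $\mu$, not on $M$.

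The last step is to check that the eigenvalues $q_{(\mu)}$ obtained this way are mutually distinct. The admissible $\mu$ range is contained in $\{-\gamma,\dotsc,\gamma\}$, and the hypothesis $j > \gamma - 1$ places us squarely in the regime where Lemma \ref{lem:distinct_eigenvalues} applies (either $j \notin \mathbb{Z}/2$, or $j \in \mathbb{Z}/2$ with $j > \gamma - 1$). Hence the $\ket{\Psi^M_{(\mu)}}$ are eigenvectors with pairwise distinct eigenvalues, so they are linearly independent. Counting them gives exactly $\min(\gamma, M-j-1) + \gamma + 1 = \dim V_M$ vectors, so they form a basis of $V_M$ and $\mathcal{Q}_M$ is diagonalizable with the claimed spectrum.

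The main obstacle I anticipate is subtle and essentially already handled by the preceding lemmas: one needs both non-vanishing of $(\mathcal{J}_+)^{n}\ket{\psi_{(\mu)}}$ (which could fail if some lowest weight vector generated a finite-dimensional subrepresentation) and the absence of coincident eigenvalues (otherwise equal $q_{(\mu)}$ would permit Jordan blocks rather than diagonalizability). Both failure modes are controlled precisely by the threshold $j > \gamma - 1$, and once those lemmas are in hand the proof is a simple dimension count. No further calculation with the explicit coefficients in $\ket{\psi_{(\mu)}}$ is required.
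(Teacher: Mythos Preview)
Your proposal is correct and follows essentially the same approach as the paper: build the vectors $(\mathcal{J}_+)^{M-j-1-\mu}\ket{\psi_{(\mu)}}$, invoke Lemma~\ref{lem:repeated_J+} for non-vanishing and Lemma~\ref{lem:distinct_eigenvalues} for distinctness of the $q_{(\mu)}$, and conclude by a dimension count. Your write-up is slightly more explicit about the dimension of $V_M$ and the $\mathcal{J}_0$-weight of $\ket{\psi_{(\mu)}}$, but the argument is the same.
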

\begin{proof}
Define, up to a normalization factor, the vectors
\begin{equation}
\ket{j+\mu,M}:=(\mathcal{J}_+)^{M-j-1-\mu}\ket{\psi_{(\mu)}}\in V_M,\quad \mu\in\left\{ -\gamma,\dotsc,\min(\gamma,M-j-1)\right\};
\end{equation}
owing to Lemma \ref{lem:repeated_J+}, they are all non-vanishing. 
Moreover, since $\mathcal{Q}$ commutes with $\mathcal{J}_+$, they are $\mathcal{Q}_M$-eigenvectors, with eigenvalues $q_{(\mu)}$. Finally, it follows from Lemma \ref{lem:distinct_eigenvalues} that the eigenvalues are all distinct: since the number of eigenvalues equals the dimension of $V_M$, $\mathcal{Q}_M$ is diagonalizable.
\end{proof}
As a consequence, the total Casimir $\mathcal{Q}$ will be overall diagonalizable. Conversely, one can prove:
\begin{proposition}\label{prop:discrete_non-diagonalizable}
When $j\leq \gamma -1$, the operator $\mathcal{Q}_{j+1+\gamma}$ is not diagonalizable.
\end{proposition}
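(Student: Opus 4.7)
The plan is to exhibit a nontrivial linear dependence among the $2\gamma+1$ natural $\mathcal{Q}$-eigenvectors
\[
u_\mu := (\mathcal{J}_+)^{\gamma-\mu}\ket{\psi_{(\mu)}} \in V_{j+1+\gamma},\qquad \mu\in\{-\gamma,\dotsc,\gamma\},
\]
and to argue that these chain vectors exhaust the eigenvectors of $\mathcal{Q}_{j+1+\gamma}$. Together these facts force $\dim \bigoplus_q\ker(\mathcal{Q}_{j+1+\gamma}-q) \leq 2\gamma < 2\gamma+1 = \dim V_{j+1+\gamma}$, so $\mathcal{Q}_{j+1+\gamma}$ is not diagonalizable.

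Under $j \leq \gamma - 1$, Lemma \ref{lem:distinct_eigenvalues} supplies a pair $\mu_0 < \mu_1$ in $\{-\gamma,\dotsc,\gamma\}$ with $\mu_0 + \mu_1 = -2j - 1$; a convenient choice is $\mu_0 = -\gamma$ and $\mu_1 = \gamma - 2j - 1$. An induction on $n$ using the commutation relations together with $\mathcal{J}_-\ket{\psi_{(\mu_0)}}=0$ and $\mathcal{J}_0\ket{\psi_{(\mu_0)}}=(j+1+\mu_0)\ket{\psi_{(\mu_0)}}$ yields
\[
\mathcal{J}_-(\mathcal{J}_+)^n\ket{\psi_{(\mu_0)}} = -n(n + 2j + 2\mu_0 + 1)(\mathcal{J}_+)^{n-1}\ket{\psi_{(\mu_0)}}.
\]
The relation $\mu_0 + \mu_1 = -2j - 1$ is precisely what makes this prefactor vanish at $n = \mu_1 - \mu_0$, so $(\mathcal{J}_+)^{\mu_1 - \mu_0}\ket{\psi_{(\mu_0)}}$ is itself a lowest weight vector in $V_{j+1+\mu_1}$ with $\mathcal{Q}$-eigenvalue $q_{(\mu_1)}$. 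Invoking a uniqueness lemma for lowest weight vectors in each $V_M$ with $M \leq j+1+\gamma$ (sketched below), this vector must equal $c\,\ket{\psi_{(\mu_1)}}$ for some scalar $c$; applying $(\mathcal{J}_+)^{\gamma - \mu_1}$ then transports the identity into $V_{j+1+\gamma}$, giving the dependence $u_{\mu_0} = c\,u_{\mu_1}$. Whether $c \neq 0$ (genuine proportionality) or $c = 0$ (outright vanishing of $u_{\mu_0}$), at most $2\gamma$ of the chain eigenvectors are linearly independent.

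To close the argument one must verify that the span of the $u_\mu$'s already contains every $\mathcal{Q}_{j+1+\gamma}$-eigenvector. This proceeds by induction on the $\mathcal{J}_-$-depth of the eigenvector: a minimal power of $\mathcal{J}_-$ carries any eigenvector $v \in V_{j+1+\gamma}$ to a nonzero lowest weight vector, which by uniqueness is a multiple of some $\ket{\psi_{(\mu')}}$; subtracting a suitable multiple of $u_{\mu'}$ reduces the depth (the requisite nonzero scalar is controlled by the displayed identity above), and the recursion bottoms out on the lowest weight vectors themselves. The principal obstacle is the uniqueness of the lowest weight vector in each intermediate $V_M$: by rank--nullity it reduces to showing $\mathcal{J}_- \colon V_M \to V_{M-1}$ is surjective for $M \leq j+1+\gamma$. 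This surjectivity is a direct matrix-element calculation in the tensor product basis, where $\mathcal{J}_-$ has a bidiagonal form whose two off-diagonal coefficients $C_-(\gamma,\mu)$ and $C_-(j,m)$ vanish only at $\mu = -\gamma$ and $m = j+1$ respectively, and these two zeros never coincide on any basis vector of $V_M$ with $M > j+1-\gamma$, allowing Gauss elimination to recover each target basis vector.
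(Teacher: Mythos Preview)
Your first two steps are sound and indeed overlap with the paper's argument: the surjectivity of $\mathcal{J}_-:V_M\to V_{M-1}$ (via its bidiagonal shape) gives uniqueness of the lowest weight vector in each $V_M$, and from this it follows that every eigenvalue of $\mathcal{Q}_{j+1+\gamma}$ is one of the $q_{(\mu)}$'s. The dependence $u_{\mu_0}\in\mathbb{C}\,u_{\mu_1}$ is also correctly derived (up to an immaterial sign in your displayed identity; the prefactor is $+n(n+2j+2\mu_0+1)$).

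The gap is in the depth-reduction induction. To subtract a multiple of $u_{\mu'}$ from a depth-$N$ eigenvector $v$ (with $\mu'=\gamma-N+1$) you need $(\mathcal{J}_-)^{N-1}u_{\mu'}$ to be a \emph{nonzero} multiple of $\ket{\psi_{(\mu')}}$. Iterating your own identity gives
\[
(\mathcal{J}_-)^{N-1}(\mathcal{J}_+)^{N-1}\ket{\psi_{(\mu')}}
=\prod_{k=1}^{N-1}k\bigl(k+2j+2\mu'+1\bigr)\,\ket{\psi_{(\mu')}},
\]
and with $\mu'=\gamma-N+1$ the factor $k+2j+2\mu'+1$ vanishes at $k=2N-2j-2\gamma-3$, which lies in $\{1,\dots,N-1\}$ precisely for $N\geq j+\gamma+2$. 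Since $j\leq\gamma-1$, such depths occur (e.g.\ $N=2\gamma+1$). For those $N$ the scalar is zero, the subtraction step fails, and you have not ruled out the existence of an eigenvector of that depth lying outside $\mathrm{span}\{u_\mu\}$. If such an eigenvector existed it would give a second independent eigenvector for the repeated eigenvalue, and $\mathcal{Q}_{j+1+\gamma}$ could in principle be diagonalizable --- so the argument does not close as written.

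The paper sidesteps this by a more direct route: after establishing (as you do) that the eigenvalues are among the $q_{(\mu)}$'s and that at most $2\gamma$ of them are distinct (Lemma~\ref{lem:distinct_eigenvalues}), it observes that in the basis $\ket{(\mu)M}$ the matrix of $\mathcal{Q}_M$ is tridiagonal with nonzero off-diagonals, so \emph{every} eigenspace is one-dimensional (Corollary~\ref{coroll:tridiagonal_eigenspaces}). Hence the total eigenspace dimension is at most $2\gamma<2\gamma+1$. This tridiagonal observation is exactly what your argument is missing; with it, the depth-reduction becomes unnecessary.
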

\begin{proof}
The proof is divided in two parts: first one shows that the only possible eigenvalues of $\mathcal{Q}_{j+1+\gamma}$ are the $q_{(\mu)}$'s. This will then be used to show that $\mathcal{Q}_{j+1+\gamma}$ is not diagonalizable.

Suppose there is a non-zero eigenvector $\ket{\varphi}\in V_{j+1+\gamma}$, with eigenvalue
\begin{equation}
\varphi\neq q_{(\mu)},\quad \mu \in \left\{-\gamma,\dotsc,\gamma\right\}.
\end{equation}
It must be
\begin{equation}
(\mathcal{J}_-)^n\ket{\varphi}=0
\end{equation}
for some
\begin{equation}
n\in\left\{1,2,\dotsc,2\gamma+1\right\},
\end{equation}
since there is only one vector in $V_{j+1-\gamma}$ and it is annihilated by $\mathcal{J}_-$. Let $N$ be the smallest such number; then $(\mathcal{J}_-)^{N-1}\ket{\varphi}\neq 0$ and
\begin{equation}
\mathcal{Q}(\mathcal{J}_-)^{N-1}\ket{\varphi}=(\mathcal{J}_-)^{N-1}\mathcal{Q}\ket{\varphi}=\varphi(\mathcal{J}_-)^{N-1}\ket{\varphi},
\end{equation}
while at the same time
\begin{equation}
\mathcal{Q}(\mathcal{J}_-)^{N-1}\ket{\varphi}=-\mathcal{J}_0(\mathcal{J}_0 -1)(\mathcal{J}_-)^{N-1}\ket{\varphi}+ \mathcal{J}_+(\mathcal{J}_-)^N\ket{\varphi}=q_{(\gamma-N+1)}(\mathcal{J}_-)^{N-1}\ket{\varphi}.
\end{equation}
It follows that $\varphi$ equals one of the $q_{(\mu)}$'s, which is a contradiction. This concludes the first part of the proof.

For the second part notice that, since $j\geq -\tfrac{1}{2}$, it is always $j\geq -\gamma$. Then, since $j\leq \gamma - 1$, it follows from Lemma \ref{lem:distinct_eigenvalues} that there are at most $2\gamma$ distinct eigenvalues. However, by acting with $\mathcal{Q}_{j+1+\gamma}$ on the basis vectors
\begin{equation}
\ket{(\mu)j+1+\gamma}=\ket{\gamma,\mu}\otimes\ket{j,j+1+\gamma-\mu}\in V_{j+1+\gamma},
\end{equation}
one obtains that the matrix elements
\begin{equation}
\mathcal{Q}_{\mu\nu}:=\braket{(\mu)j+1+\gamma|\mathcal{Q}|(\nu)j+1+\gamma}
\end{equation}
are non-vanishing only if
\begin{equation}
\mu=\nu\quad\mbox{or}\quad \mu=\nu\pm 1;
\end{equation}
in other words, $\mathcal{Q}_{\mu\nu}$ are the entries of a \emph{tridiagonal matrix} (see appendix \ref{app:tridiagonal}). In particular, it follows from Corollary \ref{coroll:tridiagonal_eigenspaces} that the eigenspaces of a tridiagonal matrix (or of an operator represented by such a matrix in a particular basis) are all $1$-dimensional. As a consequence, there are at most $2\gamma$ eigenvectors, which means $\mathcal{Q}_{j+1+\gamma}$ is not diagonalizable.
\end{proof}
Since in this case $\mathcal{Q}_M$ is non-diagonalizable for at least one $M$, $\mathcal{Q}$ will not be diagonalizable. To summarize, the coupling $F_\gamma\otimes D^+_j$ can be decomposed in irreducible representations if and only if $j>\gamma-1$.

An eigenbasis for $\mathcal{Q}$ can be constructed by defining recursively
\begin{equation}
\ket{J,M+1}=\frac{1}{C_+(J,M)}\mathcal{J}_+\ket{J,M},\quad J\in\left\{ j-\gamma,\dotsc,j+\gamma\right\}
\end{equation}
with
\begin{equation}
\begin{cases}
\mathcal{Q}\ket{J,M}=q_{(J-j)}\ket{J,M}\\
\mathcal{J}_0\ket{J,M}=M\ket{J,M},
\end{cases}
\end{equation}
starting from
\begin{equation}
\ket{J,J+1}:=\ket{\psi_{J-j}}
\end{equation}
up to a normalization factor; Lemma \ref{lem:repeated_J+} guarantees that they are all non-zero. One can easily see that each $\mathcal{Q}$-eigenspace behaves as the discrete positive representation $D^+_J$.

In terms of the the old basis elements, the change of basis must be of the form
\begin{equation}
\ket{j+\mu,M}=\sum_{\nu=-\gamma}^{\Omega_M}A^M_{\nu\mu}(j,\gamma)\ket{(\nu)M},\quad \mu\in\left\{-\gamma,\dotsc,\Omega_M\right\},\quad \Omega_M:=\min(\gamma, M-j-1),
\end{equation}
with the $A^M_{\nu\mu}$'s forming an invertible matrix. They will be called \emph{Clebsch--Gordan coefficients}, in analogy with $\mathfrak{su}(2)$ representation theory.
\subsection{Coupling of finite and continuous representations}\label{sec:fin-cont}
Consider now the coupling $F_\gamma\otimes C^\varepsilon_j$ of a finite-dimensional representation and a generic one from the continuous series, not necessarily unitary. The technique used for the discrete series will not work here, because the spectrum of $J_0$ is unbounded, hence a different approach is needed.

Again, one can work individually on each $\mathcal{J}_0$-eigenspace $V_M$, with basis vectors
\begin{equation}
\ket{(\mu)M}=\ket{\gamma, \mu}\otimes \ket{j,M-\mu},\quad \mu\in\left\{-\gamma,\dotsc,\gamma \right\},
\end{equation}
and try to diagonalize $\mathcal{Q}_M$. Explicitly, one is interested in finding a change of basis
\begin{equation}
\ket{J_{(\mu)},M}=\sum_{\nu=-\gamma}^\gamma A^M_{\nu\mu}(j,\gamma)\ket{(\nu)M},\quad \mu\in\left\{-\gamma,\dots,\gamma\right\},
\end{equation}
with
\begin{equation}
\mathcal{Q}\ket{J_{(\mu)},M}=-J_{(\mu)}\left(J_{(\mu)} +1\right)\ket{J_{(\mu)},M}.
\end{equation}
\begin{remark}
Since any non-trivial $F_\gamma$ is not unitary, the total Casimir is not Hermitian; moreover, one can easily check that it is not a \emph{normal} operator either, i.e.
\begin{equation*}
[Q^\dagger_M,Q_M]\neq 0.
\end{equation*}
As a consequence, not only the spectral theorem cannot be used to diagonalize it, but its eigenvectors will be non-orthogonal and the matrix $A^M(j,\gamma)$ non-unitary.
\end{remark}
Solving the eigenvalue equation explicitly for arbitrary $\gamma$ is too difficult. However, one can easily do it for the $2$-dimensional case $\gamma=\frac{1}{2}$: each $\mathcal{Q}_M$ is diagonalizable if and only if $j\neq-\frac{1}{2}$, with eigenvalues $q_{(\pm\frac{1}{2})}$ (the corresponding Clebsch--Gordan coefficients are listed in appendix \ref{app:CGtable}). Using this information, one can prove by induction that, when $j\not\in\mathbb{Z}/2$, $\mathcal{Q}$ is diagonalizable for all $\gamma\geq\frac{1}{2}$. The case $j\in\mathbb{Z}/2$ will be treated later with a different method.
\begin{proposition}\label{prop:continuous_non_Z/2}
When $j\not\in\mathbb{Z}/2$, the eigenvalues of $\mathcal{Q}_M$ are
\begin{equation*}
q_{(\mu)}=-(j+\mu)(j+\mu+1),\quad \mu\in\left\{ -\gamma,\dotsc,\gamma\right\},
\end{equation*}
that is
\begin{equation*}
J_{(\mu)}=j+\mu.
\end{equation*}
These are all distinct, so $\mathcal{Q}$ is diagonalizable.
\end{proposition}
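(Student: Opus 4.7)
The plan is to prove the proposition by strong induction on $\gamma\in\tfrac{1}{2}\mathbb{N}$, with the base case $\gamma=\tfrac{1}{2}$ supplied by the explicit two-dimensional calculation mentioned in the text (valid whenever $j\neq-\tfrac{1}{2}$, which is automatic under the hypothesis $j\notin\mathbb{Z}/2$). The algebraic engine of the inductive step is the Clebsch--Gordan-type identity
\begin{equation*}
F_{\tfrac{1}{2}}\otimes F_\gamma \;\cong\; F_{\gamma-\tfrac{1}{2}}\oplus F_{\gamma+\tfrac{1}{2}},
\end{equation*}
which holds as a splitting of $\sltr$-modules because the finite-dimensional representations of $\sltr$ recouple exactly as those of $\mathfrak{su}(2)$. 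Tensoring on the right with $C^\varepsilon_j$ exhibits $F_{\gamma+\tfrac{1}{2}}\otimes C^\varepsilon_j$ as an $\sltr$-invariant direct summand of $F_{\tfrac{1}{2}}\otimes F_\gamma\otimes C^\varepsilon_j$, with complement $F_{\gamma-\tfrac{1}{2}}\otimes C^\varepsilon_j$.

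I would next diagonalise $\mathcal{Q}$ on the triple product in two stages by exploiting associativity. By the inductive hypothesis at $\gamma$, the inner factor $F_\gamma\otimes C^\varepsilon_j$ splits as $\bigoplus_{\mu=-\gamma}^{\gamma}C^{\varepsilon'}_{j+\mu}$; these summands genuinely belong to the continuous series because the absence of highest- or lowest-weight vectors is inherited from $C^\varepsilon_j$ whenever $j+\mu\notin\mathbb{Z}/2$, which holds throughout by the hypothesis on $j$. Applying the base case to each factor $F_{\tfrac{1}{2}}\otimes C^{\varepsilon'}_{j+\mu}$, admissible since $j+\mu\neq-\tfrac{1}{2}$, produces
\begin{equation*}
F_{\tfrac{1}{2}}\otimes F_\gamma\otimes C^\varepsilon_j \;\cong\; \bigoplus_{\mu=-\gamma}^{\gamma}\bigl(C_{j+\mu-\tfrac{1}{2}}\oplus C_{j+\mu+\tfrac{1}{2}}\bigr),
\end{equation*}
so $\mathcal{Q}$ is diagonalisable on the entire triple product with an explicitly computable multiset of eigenvalues.

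To finish, the inductive hypothesis at $\gamma-\tfrac{1}{2}$ diagonalises $\mathcal{Q}$ on $W:=F_{\gamma-\tfrac{1}{2}}\otimes C^\varepsilon_j$. Since $\mathcal{Q}$ preserves the $\sltr$-invariant splitting $V=W\oplus U$ with $U:=F_{\gamma+\tfrac{1}{2}}\otimes C^\varepsilon_j$, diagonalisability of $\mathcal{Q}|_V$ forces diagonalisability of $\mathcal{Q}|_U$, because the minimal polynomial of $\mathcal{Q}|_U$ divides that of $\mathcal{Q}|_V$ and hence has simple roots. A multiplicity count between the triple-product decomposition and the $W$-decomposition then identifies the spectrum on $U$ as exactly $\bigoplus_{\nu=-\gamma-\tfrac{1}{2}}^{\gamma+\tfrac{1}{2}}C_{j+\nu}$, which upon restriction to a single $\mathcal{J}_0$-eigenspace $V_M$ of $U$ reproduces the claimed eigenvalues $\{q_{(\mu)}\}_{\mu=-\gamma-\tfrac{1}{2}}^{\gamma+\tfrac{1}{2}}$. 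Pairwise distinctness of these values follows directly from Lemma~\ref{lem:distinct_eigenvalues} applied at $\gamma+\tfrac{1}{2}$, since $j\notin\mathbb{Z}/2$.

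The point that requires the most care is the passage of diagonalisability from the ambient triple product to the invariant summand $U$. This reduces to the observation that the projector onto $F_{\gamma+\tfrac{1}{2}}\subset F_{\tfrac{1}{2}}\otimes F_\gamma$ is an $\sltr$-intertwiner, so its tensor with the identity on $C^\varepsilon_j$ commutes with $\mathcal{Q}$ and restricts cleanly to each finite-dimensional $\mathcal{J}_0$-eigenspace $V_M$; standard linear algebra on $V_M$ then supplies both the diagonalisability and the multiplicity bookkeeping.
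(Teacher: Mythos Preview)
Your argument is correct and shares the paper's overall strategy of inducting on $\gamma$ via tensoring with $F_{\frac{1}{2}}$, but the execution differs in a meaningful way. The paper embeds $F_\gamma \subset F_{\frac{1}{2}}\otimes F_{\gamma-\frac{1}{2}}$, applies the induction hypothesis at $\gamma-\tfrac{1}{2}$ followed by the base case, and then assembles $2\gamma+1$ explicit candidate eigenvectors $\ket{J,M}$; the delicate point is showing these actually lie in the subspace $V_M\subset F_\gamma\otimes C^\varepsilon_j$, which the paper handles by a separate secondary induction on $\mu$ using the action of $\mathcal{J}_-\mathcal{J}_+$. You instead use the full splitting $F_{\frac{1}{2}}\otimes F_\gamma \cong F_{\gamma-\frac{1}{2}}\oplus F_{\gamma+\frac{1}{2}}$, invoke the (strong) induction hypothesis at both $\gamma$ and $\gamma-\tfrac{1}{2}$, and replace the secondary induction by the clean finite-dimensional fact that the restriction of a diagonalisable operator to an invariant subspace is again diagonalisable (minimal polynomial divides, hence has simple roots), together with the eigenspace additivity $E_\lambda(V_M)=E_\lambda(W_M)\oplus E_\lambda(U_M)$ for the multiplicity bookkeeping. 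Your route is more conceptual and avoids the explicit coefficient manipulations; the paper's route, being constructive, has the side benefit of producing recursive relations among the Clebsch--Gordan coefficients used elsewhere (Appendix~\ref{app:CGprop}). One small remark: your strong induction implicitly also needs the case $\gamma=0$ to launch the step from $\gamma=\tfrac{1}{2}$ to $\gamma=1$, which is of course immediate since $F_0\otimes C^\varepsilon_j\cong C^\varepsilon_j$.
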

\begin{proof}
The proof proceeds by induction on half-integer $\gamma\geq\frac{1}{2}$. The statement is true for $\gamma=\frac{1}{2}$: suppose that it is true for $\gamma-\frac{1}{2}$ and consider the coupling $F_\gamma\otimes C^\varepsilon_j$.

The finite-dimensional representations are isomorphic to the unitary representations of~$\mathfrak{su}(2)$, seen as representations of the complexification $\mathfrak{sl}(2,\mathbb{R})_\mathbb{C}\cong \mathfrak{su}(2)_\mathbb{C}$. Consequently, the well-known result of $\mathfrak{su}(2)$ recoupling theory \cite{barut}
\begin{equation}
F_\gamma \subset  F_{\frac{1}{2}}\otimes F_{\gamma-\frac{1}{2}}\cong F_{\gamma-1}\oplus F_\gamma
\end{equation}
can be used. Explicitly
\begin{equation}
\ket{\gamma,\mu}\equiv \sum_{\sigma=-\frac{1}{2}}^\frac{1}{2}\sum_{\lambda=-\gamma+\frac{1}{2}}^{\gamma-\frac{1}{2}} \braket{  \tfrac{1}{2},\sigma ;\gamma-\tfrac{1}{2},\lambda | \gamma,\mu}  \ket{\tfrac{1}{2},\sigma}\otimes \ket{\gamma-\tfrac{1}{2},\lambda},
\end{equation}
where
\begin{equation}
\braket{  \tfrac{1}{2},\sigma ;\gamma-\tfrac{1}{2},\lambda | \gamma,\mu}
\end{equation}
are the $\mathfrak{su}(2)$ Clebsch--Gordan coefficients. One can then write, since $F_{\gamma-\frac{1}{2}}\otimes C^\varepsilon_j$ is decomposable by induction hypothesis,
\begin{align}
\ket{\gamma,\mu}\otimes\ket{j,M-\mu}=&\sum_{\sigma,\lambda} \braket{  \tfrac{1}{2},\sigma ;\gamma-\tfrac{1}{2},\lambda | \gamma,\mu} \ket{\tfrac{1}{2},\sigma} \otimes \biggl[\ket{\gamma-\tfrac{1}{2},\lambda}\otimes\ket{j,M-\mu}\biggr]\\
=&\sum_{\sigma,\lambda} \braket{  \tfrac{1}{2},\sigma ;\gamma-\tfrac{1}{2},\lambda | \gamma,\mu} \ket{\tfrac{1}{2},\sigma} \otimes\sum_{\kappa=-\gamma+\frac{1}{2}}^{\gamma-\frac{1}{2}}B^{M-\sigma}_{\kappa\lambda}\left(j,\gamma-\tfrac{1}{2}\right) \ket{j+\kappa,M-\sigma},
\end{align}
where the $B^M_{\kappa\lambda}$ are the \emph{inverse} Clebsch--Gordan coefficients, i.e. $B^M$ is the inverse of the matrix $A^M$.

In particular, when $\mu=-\gamma$, the only non-zero $\mathfrak{su}(2)$ coefficient is\footnote{The Condon--Shortley convention \cite{condon_shortley} is used here.}
\begin{equation}
\braket{ \tfrac{1}{2},-\tfrac{1}{2} ; \gamma-\tfrac{1}{2},-\gamma+\tfrac{1}{2} | \gamma,-\gamma}=1
\end{equation}
so that
\begin{equation}\label{eq:-gamma}
\ket{\gamma,-\gamma}\otimes\ket{j,M+\gamma}
=\sum_{\rho=-\frac{1}{2}}^\frac{1}{2}\sum_{\kappa=-\gamma+\frac{1}{2}}^{\gamma-\frac{1}{2}}B^{M+\frac{1}{2}}_{\kappa,-\gamma+\frac{1}{2}}(j,\gamma-\tfrac{1}{2})B^{M}_{\rho,-\tfrac{1}{2}}\left(j+\kappa,\tfrac{1}{2}\right)\ket{\left(j+\kappa\right)j+\rho+\kappa,M},
\end{equation}
where the $(j+\kappa)$ label in the vector indicates it comes from the coupling
\begin{equation}
\ket{\tfrac{1}{2},-\tfrac{1}{2} }\otimes\ket{j+\kappa,M+\tfrac{1}{2}}.
\end{equation}
There are exactly $4\gamma$ vectors on the r.h.s of eq. (\ref{eq:-gamma}): they are
\begin{equation}
\begin{cases}
\ket{\left(j-\gamma+\tfrac{1}{2}\right)j-\gamma,M}\\
\ket{\left(j+\mu-\tfrac{1}{2}\right)j+\mu,M}\quad\mbox{and}\quad\ket{\left(j+\mu+\tfrac{1}{2}\right)j+\mu,M},\qquad \mu\in\left\{-\gamma+1,\dotsc,\gamma-1\right\}\\
\ket{\left(j+\gamma-\tfrac{1}{2}\right)j+\gamma,M}.
\end{cases}
\end{equation}
Their $\mathcal{Q}$-eigenvalues are
\begin{equation}
q_{(\mu)}=-(j+\mu)(j+\mu+1),\quad \mu\in\left\{ -\gamma,\dotsc,\gamma\right\},
\end{equation}
which are all distinct (see Lemma \ref{lem:distinct_eigenvalues}), and they form a basis for the $M$ eigenspace in $ V_\frac{1}{2}\otimes V_{\gamma-\frac{1}{2}} \otimes C_{j}^\varepsilon$, i.e. they are independent.

As shown in appendix \ref{app:CGprop}, Clebsch--Gordan coefficients satisfy the property
\begin{equation}
B_{\nu+1,-\gamma}^M(j,\gamma)=\alpha_\nu(j,\gamma)\frac{\sqrt{j+\nu-M +1}}{\sqrt{j+\nu+M+1}} B_{\nu,-\gamma}^M(j,\gamma),
\end{equation}
where $\alpha_\nu$ is fixed by the normalization convention and does not depend on $M$. Using this formula and the fact that (see appendix \ref{app:CGtable})
\begin{equation}
B^M_{\rho,-\frac{1}{2}}\left(j+\kappa,\tfrac{1}{2}\right)=
\begin{cases}
-\frac{\sqrt{j+\kappa+M+\frac{1}{2}}}{\sqrt{2j+2\kappa+1}}\qquad&\mbox{if}\quad\rho=-\tfrac{1}{2}\\
\frac{\sqrt{j+\kappa-M+\frac{1}{2}}}{\sqrt{2j+2\kappa+1}}\qquad&\mbox{if}\quad\rho=\tfrac{1}{2},
\end{cases}
\end{equation}
it is possible to write
\begin{equation}
\ket{(-\gamma)M}=\ket{\gamma,-\gamma}\otimes\ket{j,M+\gamma}=\sum_{\nu=-\gamma}^\gamma B^M_{\nu,-\gamma}(j,\gamma)\ket{j+\nu,M}
\end{equation}
for some coefficients $B^M_{\nu,-\gamma}$, where the vectors on the r.h.s are defined up to a normalization factor as
\begin{equation}
\ket{J,M}:=
\begin{cases}
\ket{\left(J+\tfrac{1}{2}\right)J,M} &\qquad\mbox{if}\quad J=j-\gamma\\[0.1cm]
\ket{\left(J-\tfrac{1}{2}\right)J,M} &\qquad\mbox{if}\quad J=j+\gamma\\[0.1cm]
\frac{1}{\sqrt{2J}}\ket{\left(J-\tfrac{1}{2}\right)J,M}-\frac{\beta(J)}{\sqrt{2J+2}}\ket{\left(J+\tfrac{1}{2}\right)J,M}&\qquad\mbox{otherwise},
\end{cases}
\end{equation}
with
\begin{equation}
\beta\left(j+\kappa+\tfrac{1}{2}\right)=\alpha_\kappa\left(j,\gamma-\tfrac{1}{2}\right).
\end{equation}
Since these vectors live in different $\mathcal{Q}$-eigenspaces, they are necessarily independent. Suppose the following is true:
\begin{equation}\label{eq:nested}
\ket{J,M}\in V_M,\quad \forall J\in\left\{j-\gamma,\dotsc,j+\gamma\right\}.
\end{equation}
Then they would be $2\gamma+1$ independent eigenvectors in $V_M$, i.e. an eigenbasis, which proves the proposition. It only remains to show that (\ref{eq:nested}) is indeed true; this can be done by induction as well.

It is easily checked that, for $\mu<\gamma$,
\begin{equation}
\mathcal{J}_-\mathcal{J}_+\ket{(\mu)M}\in\mathrm{span}\left\{\ket{(\mu-1)M},\ket{(\mu)M},\ket{(\mu+1)M}\right\},
\end{equation}
with
\begin{equation}
\braket{(\mu+1)M|\mathcal{J}_-\mathcal{J}_+|(\mu)M}=C_+(\gamma,\mu)C_-(j,M-\mu)\neq 0.
\end{equation}
Consequently, it must be
\begin{equation}
\ket{(\mu+1)M}\in\mathrm{span}\left\{\ket{(\mu-1)M},\ket{(\mu)M},\mathcal{J}_-\mathcal{J}_+\ket{(\mu)M}\right\}.
\end{equation}
Suppose that
\begin{equation}
\ket{(\mu-1)M},\ket{(\mu)M}\in\mathrm{span}\left\{\ket{J,M}|J=j-\gamma,\dotsc,j+\gamma\right\}.
\end{equation}
Since $\mathcal{J}_-\mathcal{J}_+$ commutes with $\mathcal{Q}_M$, one has
\begin{equation}
\mathcal{J}_-\mathcal{J}_+\ket{(\mu)M}\in\mathrm{span}\left\{\ket{J,M}|J=j-\gamma,\dotsc,j+\gamma\right\},
\end{equation}
so that
\begin{equation}
\ket{(\mu+1)M}\in\mathrm{span}\left\{\ket{J,M}|J=j-\gamma,\dotsc,j+\gamma\right\}
\end{equation}
as well. 
Since the hypothesis is valid for $\mu=-\gamma$ (note that $\ket{(-\gamma-1)M}\equiv\vec{0}$), it follows by induction that every basis vector $\ket{(\mu)M}$ can be written as a linear combination of the independent $\ket{J,M}$ vectors. As their number match, the latter must form a basis for $V_M$, so that they are, in fact, eigenvectors for $\mathcal{Q}_M$.
\end{proof}
When $j\in\mathbb{Z}/2$, $\mathcal{Q}_M$ is not always diagonalizable. In order to prove when it can be done, the following lemma is needed.
\begin{lemma}\label{lem:continuous_Z/2}
When $j\in\mathbb{Z}/2$, the eigenvalues of $\mathcal{Q}_M$ are given by
\begin{equation*}
q_{(\mu)}=-(j+\mu)(j+\mu+1),\quad \mu \in \left\{ -\gamma,\dotsc,\gamma\right\}.
\end{equation*}
\end{lemma}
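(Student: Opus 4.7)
The plan is a polynomial (or analytic-continuation) argument in $j$, leveraging Proposition~\ref{prop:continuous_non_Z/2} as the input. The strategy is to show that the characteristic polynomial $p_M(\lambda,j) := \det(\lambda\mathbbm{1} - \mathcal{Q}_M)$ is a polynomial in $j$, and then extend the factorization already known for $j\notin\mathbb{Z}/2$ to all of $\mathbb{C}$ by a polynomial identity.

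First I would observe that $\mathcal{Q}_M$ is tridiagonal in the basis $\{\ket{(\mu)M}\}_{\mu=-\gamma}^{\gamma}$: writing $\mathcal{Q} = -\mathcal{J}_0(\mathcal{J}_0+1) + \mathcal{J}_-\mathcal{J}_+$, the first summand acts as $-M(M+1)\mathbbm{1}$ on $V_M$, while in the expansion $\mathcal{J}_-\mathcal{J}_+ = J_-J_+\otimes\mathbbm{1} + \mathbbm{1}\otimes J_-J_+ + J_+\otimes J_- + J_-\otimes J_+$ only the two cross-terms shift $\mu$, each by $\pm 1$. Then, recalling that the characteristic polynomial of a tridiagonal matrix depends only on its diagonal entries and on the products of symmetric off-diagonal pairs, a short computation using (\ref{eq:C+-}) gives
\begin{equation*}
\braket{(\mu+1)M|\mathcal{Q}|(\mu)M}\,\braket{(\mu)M|\mathcal{Q}|(\mu+1)M} = [C_+(\gamma,\mu)\,C_-(j,M-\mu)]^2 = (\gamma-\mu)(\gamma+\mu+1)(j+M-\mu)(j-M+\mu+1),
\end{equation*}
and each diagonal entry is likewise polynomial in $j$. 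Hence $p_M(\lambda,j)$ is a polynomial in both $\lambda$ and $j$, despite the individual matrix entries of $\mathcal{Q}_M$ carrying square roots of quantities linear in $j$.

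By Proposition~\ref{prop:continuous_non_Z/2}, for every $j\in\mathbb{C}\setminus\mathbb{Z}/2$ one has the factorization $p_M(\lambda,j) = \prod_{\mu=-\gamma}^{\gamma}(\lambda - q_{(\mu)}(j))$, with $q_{(\mu)}(j) = -(j+\mu)(j+\mu+1)$ itself polynomial in $j$. Two polynomials in $j$ that agree off a discrete set must coincide identically, so the factorization persists for all $j\in\mathbb{C}$; specializing to $j\in\mathbb{Z}/2$ identifies the eigenvalues of $\mathcal{Q}_M$ with the $q_{(\mu)}$, now possibly with algebraic multiplicities as predicted by Lemma~\ref{lem:distinct_eigenvalues}.

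The only non-trivial step is the polynomiality of $p_M(\lambda,j)$ in $j$: a priori the matrix of $\mathcal{Q}_M$ lives on a branched cover of the $j$-plane, and it is precisely the tridiagonal structure together with the identity (\ref{eq:C+-}) that conspires to pair symmetric off-diagonal entries into their common square, eliminating the branches.
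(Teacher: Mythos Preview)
Your proof is correct and follows essentially the same strategy as the paper's: both extend the factorization of $\det(\mathcal{Q}_M-\lambda\mathbbm{1})$ from $j\notin\mathbb{Z}/2$ (Proposition~\ref{prop:continuous_non_Z/2}) to $j\in\mathbb{Z}/2$ by continuity in $j$. The paper simply invokes continuity of the determinant as a function of real $j$ and takes the limit, whereas you sharpen this to a polynomial identity by exploiting the tridiagonal recurrence to see that the square roots in the off-diagonal entries pair up into polynomials --- a more robust justification of the same idea.
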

\begin{proof}
The result follows by continuity from Proposition \ref{prop:continuous_non_Z/2}. First notice that the function
\begin{equation}
d(j,\lambda):=\det(\mathcal{Q}_M-\lambda\mathbbm{1})
\end{equation}
is continuous (in the complex plane) in $j\in\mathbb{R}$, since it is a product of continuous functions of $j$. Moreover, for $j\not\in\mathbb{Z}/2$, it is given by
\begin{equation}
d(j,\lambda)=\prod_{\mu=-\gamma}^\gamma[-(j+\mu)(j+\mu+1)-\lambda],
\end{equation}
as a consequence of Proposition \ref{prop:continuous_non_Z/2}. Now let $k\in\mathbb{Z}/2$; since $d$ is continuous, it must be
\begin{equation}
d(k,\lambda)=\lim_{j\rightarrow k} d(j,\lambda)=\prod_{\mu=-\gamma}^\gamma[-(k+\mu)(k+\mu+1)-\lambda]
\end{equation}
so that the eigenvalues of $\mathcal{Q}_M$ are the $q_{(\mu)}$'s.
\end{proof}
It is now possible to prove that
\begin{proposition}\label{prop:continuous_Z/2}
When $j\in\mathbb{Z}/2$, $\mathcal{Q}$ is diagonalizable if and only if $j>\gamma-1$ or $j<-\gamma$.
\end{proposition}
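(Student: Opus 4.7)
The plan is to split the biconditional into its two directions and reduce each to earlier machinery. First recall that the proof of Lemma~\ref{lem:continuous_Z/2}, via continuity from $j\not\in\mathbb{Z}/2$, actually pins down the full characteristic polynomial of $\mathcal{Q}_M$ for $j\in\mathbb{Z}/2$ as $\det(\mathcal{Q}_M-\lambda\mathbbm{1})=\prod_{\mu=-\gamma}^{\gamma}[q_{(\mu)}-\lambda]$, so diagonalizability reduces to matching geometric and algebraic multiplicities. For the ``if'' direction I would argue: when $j\in\mathbb{Z}/2$ satisfies $j>\gamma-1$ or $j<-\gamma$, Lemma~\ref{lem:distinct_eigenvalues} guarantees the $2\gamma+1$ values $q_{(\mu)}$ are pairwise distinct; hence each root is simple, $\mathcal{Q}_M$ has $\dim V_M$ distinct eigenvalues for every $M$, and diagonalizability of $\mathcal{Q}$ follows at once.

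For the ``only if'' direction, assume $j\in\mathbb{Z}/2$ with $-\gamma\leq j\leq\gamma-1$, and proceed in close analogy with the proof of Proposition~\ref{prop:discrete_non-diagonalizable}. The first step is to compute $\mathcal{J}_-\mathcal{J}_+\ket{(\mu)M}$ directly in the basis $\{\ket{(\mu)M}\}_{\mu=-\gamma}^{\gamma}$ and observe that only $\ket{(\mu-1)M}$, $\ket{(\mu)M}$ and $\ket{(\mu+1)M}$ appear, so that $\mathcal{Q}_M$ is represented by a tridiagonal matrix. The second step is to verify that its off-diagonal entries, which take the form $C_+(\gamma,\mu)C_-(j,M-\mu)$ or $C_-(\gamma,\mu)C_+(j,M-\mu)$, are never zero: the $\gamma$-factors avoid the extremal weights because the off-diagonal indices keep $\mu$ in a subrange on which the corresponding $C_\pm(\gamma,\mu)$ does not vanish, while the $j$-factors do not vanish because the admissibility constraint $j-\varepsilon\not\in\mathbb{Z}$ for the continuous class, when $j\in\mathbb{Z}/2$, forbids $C_\pm(j,m)$ from vanishing on any $m\in\varepsilon+\mathbb{Z}$. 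With both facts in hand, Corollary~\ref{coroll:tridiagonal_eigenspaces} forces every eigenspace of $\mathcal{Q}_M$ to be one-dimensional, so the number of independent eigenvectors equals the number of \emph{distinct} eigenvalues; by Lemma~\ref{lem:distinct_eigenvalues} this is at most $2\gamma<2\gamma+1=\dim V_M$, and $\mathcal{Q}_M$ (hence $\mathcal{Q}$) fails to be diagonalizable.

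The step I expect to be most delicate is the non-vanishing check of the off-diagonal entries, since it is the only place where one must go beyond the generic-$j$ continuity used in Lemma~\ref{lem:continuous_Z/2} and invoke the precise admissibility constraints defining the continuous class. Everything else is a clean assembly of Lemmas~\ref{lem:continuous_Z/2} and~\ref{lem:distinct_eigenvalues} with the tridiagonal corollary, mirroring the structure of the discrete-series argument in Proposition~\ref{prop:discrete_non-diagonalizable}.
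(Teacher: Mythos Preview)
Your proposal is correct and follows essentially the same route as the paper. The paper's proof is a compressed one-paragraph version of exactly your assembly: it invokes Lemma~\ref{lem:continuous_Z/2} for the eigenvalue list, Lemma~\ref{lem:distinct_eigenvalues} for the distinctness criterion, observes (by reference to the discrete case) that $\mathcal{Q}_M$ is tridiagonal in the $\ket{(\mu)M}$ basis, and then appeals to Corollary~\ref{coroll:tridiagonal_eigenspaces} to conclude that diagonalizability is equivalent to the eigenvalues being distinct. The only notable difference is that the paper treats both directions at once via this equivalence, whereas you handle the ``if'' direction by the elementary fact that distinct eigenvalues already force diagonalizability (without the tridiagonal structure); and you make explicit the non-vanishing check on the off-diagonal entries---using the continuous-class constraint $j-\varepsilon\notin\mathbb{Z}$---which the paper leaves implicit in its ``like in the discrete case'' remark.
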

\begin{proof}
One has from Lemma \ref{lem:distinct_eigenvalues} from the previous section that the eigenvalues of each $\mathcal{Q}_M$ (given by Lemma \ref{lem:continuous_Z/2}) are all distinct if and only if $j>\gamma-1$ or $j<-\gamma$. However, like in the discrete case (see proof of Proposition \ref{prop:discrete_non-diagonalizable}), $\mathcal{Q}_M$ is represented in the $\ket{(\mu)M}$ basis by a tridiagonal matrix. It follows from Corollary \ref{coroll:tridiagonal_eigenspaces} that the $\mathcal{Q}_M$ are diagonalizable if and only if the eigenvalues are all different, i.e. $j>\gamma-1$ or $j<-\gamma$, as required.
\end{proof}

To summarize, the coupling $F_\gamma\otimes C^\varepsilon_j$ can be decomposed in irreducible representations if and only if $j\not\in\mathbb{Z}/2$ or, when $j$ is (half-)integer, if $j>\gamma-1$ or $j<-\gamma$. One can check directly that each $\mathcal{Q}$-eigenspace behaves as a continuous representation.

\subsection{Wigner--Eckart theorem for \texorpdfstring{$\mathfrak{sl}(2,\mathbb{R})$}{sl(2,R)}}\label{sec:WE}
Before stating the theorem, some new notation will be defined.
Let $\rho_j$ be a generic irreducible representation on the vector space $V_j$, where $j$ is to be thought as encoding, in addition to its numerical value, class and parity of the representation.
It will be useful to define
\begin{equation}
\mathcal{D}(\gamma,j):=
\left\{
j'\left\vert\right. \rho_{j'}\subseteq F_\gamma\otimes\rho_j
\right\},
\end{equation}
i.e. the set of labels $j'$ appearing in the decomposition of $F_\gamma\otimes\rho_j$ (assuming such a decomposition exists). Moreover, $\mathcal{M}_j$ will denote the set of possible $m$ values of the representation $\rho_j$, i.e.
\begin{equation}
\mathcal{M}_j:=
\left\{
m\in\mathbb{Z}/2 \left\vert\right. \ket{j,m}\in V_j
\right\}.
\end{equation}
Lastly, instead of working with the matrix form of the Clebsch--Gordan coefficients, one can define
\begin{subequations}\label{eq:CGs}
\begin{align}
\label{eq:CG}
A(\gamma,\mu;j,m|J,M):=&\sum_{\nu=-\gamma}^\gamma A^{M}_{\mu\nu}(j,\gamma)\,\delta_{M,m+\mu}\,\delta_{J,j+\nu}\\
\label{eq:inverse_CG}
B(J,M|\gamma,\mu;j,m):=&\sum_{\nu=-\gamma}^\gamma B^{M}_{\nu\mu}(j,\gamma)\,\delta_{M,m+\mu}\,\delta_{J,j+\nu},
\end{align}
\end{subequations}
which will still be referred to as Clebsch--Gordan coefficients.
As $A^M$ and $B^M$ are one the inverse of the other, the new coefficients satisfy the orthogonality relations
\begin{subequations}
\begin{equation}\label{eq:CG_ortho1}
\sum_{J\in\mathcal{D}(\gamma,j)}\sum_{M\in\mathcal{M}_{J}} A(\gamma,\mu;j,m|J,M)B(J,M|\gamma,\mu';j,m')=\delta_{\mu,\mu'}\,\delta_{m,m'},\quad \mu\in\mathcal{M}_\gamma,\quad m\in\mathcal{M}_j
\end{equation}
\begin{equation}\label{eq:CG_ortho2}
\sum_{\mu\in\mathcal{M}_\gamma}\sum_{m\in\mathcal{M}_j} B(J,M|\gamma,\mu;j,m)A(\gamma,\mu;j,m|J',M')=\delta_{J,J'}\,\delta_{M,M'},\quad J\in\mathcal{D}(\gamma,j),\quad M\in\mathcal{M}_J.
\end{equation}
\end{subequations}
The finite-dimensional case can be covered as well by putting
\begin{equation}
A(\gamma,\mu;j,m|J,M):=\braket{\gamma,\mu;j,m|J,M},\quad B(J,M|\gamma,\mu;j,m):=\braket{J,M|\gamma,\mu;j,m},
\end{equation}
which satisfy the same orthogonality relations \cite{messiah2}.

It is now possible to prove the Wigner--Eckart theorem.
\begin{theorem*}[Wigner--Eckart for $\mathfrak{sl}(2,\mathbb{R})$]\label{thm:WE}
Let $T^\gamma$ be an $\sltr$ tensor operator of rank $\gamma$ between two irreducible representations $\rho_j$ and $\rho_{j'}$. When $F_\gamma\otimes \rho_j$ admits a decomposition in irreducible representations, the matrix elements of $T^\gamma$ can be expressed as
\begin{equation*}
\braket{j',m'|T^\gamma_\mu|j,m}=\braket{j'\Vert T^\gamma\Vert j}B(j',m'|\gamma,\mu;j,m),
\end{equation*}
where the \emph{reduced matrix element} $\braket{j'\Vert T^\gamma\Vert j}\in\mathbb{C}$ does not depend on $m$, $m'$ or $\mu$. In particular, if $\rho_{j'}$ is not in the decomposition of $F_\gamma\otimes \rho_j$, the matrix elements necessarily vanish.
\end{theorem*}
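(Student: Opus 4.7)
The plan is to recast the tensor operator $T^\gamma$ as an intertwiner between two $\sltr$-representations and then exploit Schur's lemma on the irreducible summands of $F_\gamma\otimes\rho_j$. The first step is to define the linear map
\begin{equation*}
\tilde T : F_\gamma \otimes V_j \longrightarrow V_{j'}, \qquad \tilde T\bigl(\ket{\gamma,\mu}\otimes\ket{j,m}\bigr) := T^\gamma_\mu\ket{j,m},
\end{equation*}
and to verify directly from (\ref{eq:tenso_op_def}) that $\tilde T$ intertwines the tensor product action $F_\gamma(X)\otimes\mathbbm{1} + \mathbbm{1}\otimes\rho_j(X)$ with $\rho_{j'}(X)$. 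This is essentially the derivation of (\ref{eq:tenso_op_def}) from the abstract tensor-operator definition, read in reverse, and is the only step where the tensor-operator property is used.

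By hypothesis (Propositions \ref{prop:discrete_diagonalizable}, \ref{prop:continuous_non_Z/2}, \ref{prop:continuous_Z/2}, together with Remark \ref{rem:discrete+-} and the standard $\mathfrak{su}(2)$ recoupling for the finite case), one has $F_\gamma\otimes V_j \cong \bigoplus_{J\in\mathcal{D}(\gamma,j)} V_J$, with canonical bases $\{\ket{J,M}\}_{M\in\mathcal{M}_J}$ obeying the standard relations (\ref{eq:reps}). Restricting $\tilde T$ to each summand yields an intertwiner $\tilde T_J : V_J\to V_{j'}$ of irreducible representations. For $J\neq j'$ the two irreducibles are non-isomorphic, and Schur's lemma forces $\tilde T_J \equiv 0$. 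For $J = j'$, intertwining with $J_0$ shows that $\tilde T(\ket{j',M})$ is an $M$-eigenvector of $J_0$ in $V_{j'}$, hence of the form $c_M\ket{j',M}$ (with $c_M=0$ when $M\notin\mathcal{M}_{j'}$); intertwining then with $J_\pm$ and using that the coefficients $C_\pm(J,M)$ on $V_J$ and on $V_{j'}$ coincide when $J=j'$ forces $c_M$ to be independent of $M$. Denote this common scalar by $\braket{j'\Vert T^\gamma\Vert j}$.

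To conclude, I would invert the Clebsch--Gordan change of basis,
\begin{equation*}
\ket{\gamma,\mu}\otimes\ket{j,m} = \sum_{J\in\mathcal{D}(\gamma,j)}\sum_{M\in\mathcal{M}_J} B(J,M|\gamma,\mu;j,m)\,\ket{J,M},
\end{equation*}
apply $\tilde T$ to both sides, and project onto $\bra{j',m'}$: the argument of the preceding paragraph collapses the double sum to the single term $J=j'$, $M=m'$, yielding the claimed factorisation. The vanishing statement when $j'\notin\mathcal{D}(\gamma,j)$ is automatic, since no term of that $J$ appears in $B(j',m'|\gamma,\mu;j,m)$. The delicate ingredient is the precise form of Schur's lemma used: one needs that, even for these non-unitary irreducibles, $\mathrm{End}(V_{j'})=\mathbb{C}$ and $\mathrm{Hom}(V_J,V_{j'}) = 0$ when $J\neq j'$. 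Both follow from the multiplicity-free $J_0$-weight decomposition of each irreducible, which reduces any intertwiner to a single scalar via the recursion provided by $J_\pm$.
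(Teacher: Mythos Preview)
Your proposal is correct and is essentially the paper's proof, recast in the language of intertwiners and Schur's lemma. The paper works with the explicit vectors $\ket{\psi_{j'',m''}}:=\sum_{\mu,m}A(\gamma,\mu;j,m|j'',m'')\,T^\gamma_\mu\ket{j,m}$, which are precisely your $\tilde T(\ket{j'',m''})$; its verification that $J_0$ and $J_\pm$ act on these vectors with the standard coefficients is exactly your check that $\tilde T$ is an intertwiner, and its argument that the proportionality factor $N(j'',m'')$ is independent of $m''$ and vanishes for $j''\neq j'$ is the weight-space version of Schur you spell out in your last paragraph.
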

\begin{proof}
If $F_\gamma\otimes \rho_j$ admits a decomposition in irreducible representations, the Clebsch--Gordan coefficients (\ref{eq:CG}) exist, and one can define the vectors
\begin{equation}
\ket{\psi_{j'',m''}}:=\sum_{\mu\in\mathcal{M}_\gamma}\sum_{m\in\mathcal{M}_j}A(\gamma,\mu;j,m|j'',m'')\,T^\gamma_\mu\ket{j,m},\quad j''\in\mathcal{D}(\gamma,j),\quad m''\in\mathcal{M}_{j''}.
\end{equation}
By virtue of the orthogonality relation (\ref{eq:CG_ortho1}), this can be inverted to get
\begin{equation}\label{eq:WE_T_range}
T^\gamma_\mu\ket{j,m}=\sum_{j''\in\mathcal{D}(\gamma,j)}\sum_{m''\in\mathcal{M}_{j''}}B(j'',m''|\gamma,\mu;j,m)\ket{\psi_{j'',m''}}.
\end{equation}
Now consider the action of the $\sltr$ generators on the $\ket{\psi_{j'',m''}}$ vectors. One finds, using the definition of tensor operator (\ref{eq:tenso_op_def}), that
\begin{align}
J_0\ket{\psi_{j'',m''}}&=\sum_{\mu,m}A(\gamma,\mu;j,m|j'',m'')\,J_0T^\gamma_\mu\ket{j,m}\notag\\
&=\sum_{\mu,m}A(\gamma,\mu;j,m|j'',m'')\,\left\{T^\gamma_\mu J_0 +[J_0,T^\gamma_\mu]\right\}\ket{j,m}\notag\\
&=\sum_{\mu,m}A(\gamma,\mu;j,m|j'',m'')\,(m+\mu)\ket{j,m};
\end{align}
using the fact that the Clebsch--Gordan coefficients vanish unless $m+\mu=m''$, one has
\begin{equation}\label{eq:J0_psi_action}
J_0\ket{\psi_{j'',m''}}=m''\ket{\psi_{j'',m''}}.
\end{equation}
Analogously, one has
\begin{align}
J_\pm\ket{\psi_{j'',m''}}&=\sum_{\mu,m}A(\gamma,\mu;j,m|j'',m'')\,\left\{T^\gamma_\mu J_\pm +[J_\pm,T^\gamma_\mu]\right\}\ket{j,m}\notag\\
=&\sum_{\mu\in\mathcal{M}_\gamma}\sum_{m\in\mathcal{M}_j}A(\gamma,\mu;j,m|j'',m'')\,\left\{C_\pm(j,m)\,T^\gamma_\mu\ket{j,m\pm 1}+ C_\pm(\gamma,\mu)\,T_{\mu\pm 1}^\gamma\ket{j,m}\right\}\notag\\
=& \sum_{\mu\in\mathcal{M}^\pm_\gamma}\sum_{m\in\mathcal{M}_j}C_\pm(\gamma,\mu \mp 1)A(\gamma,\mu \mp 1;j,m|j'',m'')\,T^\gamma_\mu\ket{j,m} \notag\\
\label{eq:J+-_psi_action0}
& + \sum_{\mu\in\mathcal{M}_\gamma}\sum_{m\in\mathcal{M}^\pm_j}C_\pm(j,m\mp 1)A(\gamma,\mu;j,m\mp 1|j'',m'')\,T^\gamma_\mu\ket{j,m},
\end{align}
where
\begin{equation}
\mathcal{M}^\pm:=\left\{m\pm 1\left\vert\right. m\in\mathcal{M} \right\}.
\end{equation}
One can easily see that
\begin{equation}
C_\pm(\gamma,\mu)=0\quad\forall \mu\in \mathcal{M}_\gamma\Delta\mathcal{M}^\pm_\gamma := (\mathcal{M}_\gamma\setminus \mathcal{M}^\pm_\gamma)\cup (\mathcal{M}^\pm_\gamma \setminus \mathcal{M}_\gamma)
\end{equation}
and that, for arbitrary $\rho_j$, it is either
\begin{equation}
\mathcal{M}_j\Delta\mathcal{M}^\pm_j=\emptyset\quad\mbox{or}\quad C_\pm(j,m)=0\quad\forall m \in \mathcal{M}_j\Delta\mathcal{M}^\pm_j;
\end{equation}
One can then rewrite (\ref{eq:J+-_psi_action0}) as
\begin{equation}
\sum_{\mu\in\mathcal{M}_\gamma}\sum_{m\in\mathcal{M}_j}
\left\{
C_\pm(\gamma,\mu \mp 1)A(\gamma,\mu \mp 1;j,m|j'',m'') +
C_\pm(j,m\mp 1)A(\gamma,\mu;j,m\mp 1|j'',m'')
\right\}
T^\gamma_\mu\ket{j,m}.
\end{equation}
Using the Clebsch--Gordan recursion relation from appendix \ref{app:CGprop}, which reads
\begin{equation}
C_\pm(J,M)A(\gamma,\mu;j,m|J,M\pm 1)=
C_\pm(\gamma,\mu \mp 1)A(\gamma,\mu \mp 1;j,m|J,M) +
C_\pm(j,m\mp 1)A(\gamma,\mu;j,m\mp 1|J,M),
\end{equation}
one finally finds
\begin{equation}\label{eq:J+-_psi_action}
J_\pm\ket{\psi_{j'',m''}}=C_\pm(j'',m'')\ket{\psi_{j'',m''\pm1}}.
\end{equation}
Equations (\ref{eq:J0_psi_action}) and (\ref{eq:J+-_psi_action}) imply that
\begin{equation}
\mathrm{span}\left\{
\ket{\psi_{j'',m''}}
\left\vert\right.
m\in\mathcal{M}_{j''}\right\}
\cong
V_{j''}
\end{equation}
and
\begin{equation}
\ket{\psi_{j'',m''}}\propto \ket{j'',m''}.
\end{equation}
One can show that the proportionality factor
\begin{equation}
N(j'',m''):=\braket{j'',m''|\psi_{j'',m''}}
\end{equation}
does not depend on $m''$. In fact, consider the matrix element
\begin{equation}
\braket{j'',m''+1|J_+|\psi_{j'',m''}}=C_+(j'',m'')N(j'',m''+1).
\end{equation}
It can be rewritten as
\begin{equation}
\braket{j'',m''+1|J_+|\psi_{j'',m''}}=N(j'',m'')\braket{j'',m''+1|J_+|j'',m''}=C_+(j'',m'')N(j'',m''),
\end{equation}
so that
\begin{equation}
N(j'',m''+1)=N(j'',m''),\quad\forall m''\in\mathcal{M}_{j''},
\end{equation}
which is only possible if $N$ doesn't depend on $m''$.

Now, equation (\ref{eq:WE_T_range}) implies that the range of $T^\gamma$ is spanned by all the vectors $\ket{\psi_{j'',m''}}$; since it has to be a subset of $V_{j'}$, it must necessarily be
\begin{equation}
N(j'')=0\quad\forall j''\neq j'.
\end{equation}
The matrix elements of $T^\gamma$ are then given by
\begin{equation}
\braket{j',m'|T^\gamma_\mu|j,m}=N(j')B(j',m'|\gamma,\mu;j,m),
\end{equation}
which includes the case $j'\not\in\mathcal{D}(\gamma,j)$ as the Clebsch--Gordan coefficients vanish in this case. The theorem is recovered by putting
\begin{equation}
\braket{j''\Vert T^\gamma\Vert j}:= N(j'').
\end{equation}
This concludes the proof.
\end{proof}

\subsection{An application: the Jordan--Schwinger representation}
An application of the Wigner--Eckart theorem for $\sltr$ will be presented here. It is well known in the quantum theory of angular momentum, where the Lie algebra $\mathfrak{su}(2)$ is used, that the generators of the algebra (physically corresponding to infinitesimal rotations) can be expressed in terms of a pair of uncoupled \emph{quantum harmonic oscillators} \cite{schwinger1952}. This result is known as \emph{Jordan--Schwinger representation}. Explicitly, the generators $K_z$, $K_+$ and $K_-$~---~in the physicist convention, analogue to (\ref{eq:phys_generators})~---~with commutation relations
\begin{equation}
[K_z,K_\pm]=\pm K_\pm,\quad [K_+,K_-]=2K_z
\end{equation}
can be expressed as
\begin{equation}
K_z=\tfrac{1}{2}\left(a^\dagger a - b^\dagger b \right),\quad K_+=a^\dagger b,\quad K_-=b^\dagger a,
\end{equation}
where $a$ and $b$ are quantum harmonic oscillators, i.e. satisfy
\begin{equation}
[a,a^\dagger]=[b,b^\dagger]=\mathbbm{1},
\end{equation}
and all the other commutators vanish\footnote{More generally $a$, $a^\dagger$, $b$, $b^\dagger$ and $\mathbbm{1}$ form a unitary representation of the $5$-dimensional Heisenberg algebra $\mathfrak{h}_2(\mathbb{R})$.}

One may ask if a similar result holds for $\sltr$ representations: the answer is positive for the discrete and finite-dimensional series, but an analogous construction for the continuous series is not easily guessed and, in fact, was not available until now. It will be shown here how the Wigner--Eckart theorem can be used to find an analogous of the Jordan--Schwinger representation for $\sltr$, which covers all representation classes.

First notice that a rank-$1$ tensor operator $V$ can be constructed out of the algebra generators, with components
\begin{equation}
V_{\pm 1}= \mp i J_\pm,\quad V_0=-\sqrt{2}J_0.
\end{equation}
An alternative way to look at the Jordan--Schwinger construction is to look for two rank-$\frac{1}{2}$ tensor operators $T$ and $\widetilde{T}$ that can be combined to obtain $V$. Explicitly, one can make the ansatz
\begin{equation}\label{eq:ansatz1}
V_\mu=\sum_{\mu_1=-\frac{1}{2}}^\frac{1}{2} \sum_{\mu_2=-\frac{1}{2}}^\frac{1}{2} \braket{\tfrac{1}{2},\mu_1;\tfrac{1}{2},\mu_2|1,\mu}
\, T_{\mu_1}\widetilde{T}_{\mu_2}.
\end{equation}
It can be shown that (\ref{eq:ansatz1}) implies $V$ is a rank-$1$ tensor operator (see \cite{barut}). Substituting the coefficients from appendix \ref{app:CGtable} one gets, in terms of the generators,
\begin{equation}
J_\pm=\pm i T_\pm \widetilde{T}_\pm,\quad J_0=-\tfrac{1}{2} \left( T_- \widetilde{T}_+ + T_+ \widetilde{T}_- \right),
\end{equation}
with the shorthand notation
\begin{equation}
T_\pm:=T_{\pm\frac{1}{2}}.
\end{equation}
Since the vector operator $V$ is constrained to map each representation to itself by the generators, the additional assumption
\begin{equation}\label{eq:ansatz2}
\braket{j'\Vert T \Vert j}=f(j)\,\delta_{j',j+\frac{1}{2}},
\quad
\braket{j'\Vert \widetilde{T} \Vert j}=\widetilde{f}(j)\,\delta_{j',j-\frac{1}{2}}
\end{equation}
will be made, with $f$ and $\widetilde f$ arbitrary functions.

The matrix elements of the generators are known. Using the ansatz, one gets
\begin{equation}
C_+(j,m)=\braket{j,m+1|J_+|j,m}=i\braket{j,m+1|T_+|j-\tfrac{1}{2},m+\tfrac{1}{2}}\braket{j-\tfrac{1}{2},m+\tfrac{1}{2}|\widetilde{T}_+|j,m}.
\end{equation}
The right hand side can be evaluated using the Wigner--Eckart theorem, assuming the decomposition $F_{\tfrac{1}{2}}\otimes \rho_j$
exists. One gets that the r.h.s is  
\begin{equation}
i\frac{f(j-\tfrac{1}{2})\widetilde{f}(j)}{\sqrt{2j}\sqrt{2j+1}}\sqrt{j-m}\sqrt{j+m+1}=\frac{f(j-\tfrac{1}{2})\widetilde{f}(j)}{\sqrt{2j}\sqrt{2j+1}}\, C_+(j,m),
\end{equation}
so that it must be
\begin{equation}\label{eq:ansatz_condition}
\frac{f(j-\tfrac{1}{2})\widetilde{f}(j)}{\sqrt{2j}\sqrt{2j+1}} = 1.
\end{equation}
The same constraint is obtained by repeating the argument for $J_-$ and $J_0$, which means the ansatz is true whenever (\ref{eq:ansatz_condition}) holds. To simplify notation the choice
\begin{equation}
f(j)=\widetilde{f}(j)=\sqrt{2j+1}
\end{equation}
will be used here.

The action of $T$ and $\widetilde T$ is found to be
\begin{subequations}
\begin{align}
T_-\ket{j,m}&=\sqrt{j-m+1}\ket{j+\tfrac{1}{2},m-\tfrac{1}{2}}\\
T_+\ket{j,m}&=\sqrt{j+m+1}\ket{j+\tfrac{1}{2},m+\tfrac{1}{2}}\\
\widetilde{T}_-\ket{j,m}&=-\sqrt{j+m}\ket{j-\tfrac{1}{2},m-\tfrac{1}{2}}\\
\widetilde{T}_+\ket{j,m}&=\sqrt{j-m}\ket{j-\tfrac{1}{2},m+\tfrac{1}{2}},
\end{align}
\end{subequations}
from which it follows that
\begin{equation}\label{eq:heisenberg_cr}
[T_+,\widetilde{T}_-]=[\widetilde{T}_+,T_-]=\mathbbm{1},
\end{equation}
with all other commutators vanishing.

These commutation relations closely resemble those of the harmonic oscillator and, in fact generalize them. For example, when the representation considered is $F_j$, with $j\geq\frac{1}{2}$, one finds by inspection
\begin{equation}
T_\pm=\mp \widetilde{T}^\dagger_\mp.
\end{equation}
Renaming
\begin{equation}
\widetilde{T}_-=a,\quad \widetilde{T}_+=-b
\end{equation}
one finds
\begin{equation}
J_+=ia^\dagger b,\quad J_-=i b^\dagger a,\quad J_0=\tfrac{1}{2}\left(a^\dagger a - b^\dagger b \right),
\end{equation}
with $a$ and $b$ satisfying the harmonic oscillator commutation relation.

Analogously, for the discrete series $D^\pm_j$, with $j\geq 0$, one has
\begin{equation}
T_\pm=
\begin{cases}
-\widetilde{T}^\dagger_\mp &\quad \mbox{for }D^+_j\\
\widetilde{T}^\dagger_\mp &\quad \mbox{for }D^-_j.
\end{cases}
\end{equation}
With the choice
\begin{equation}
\begin{cases}
\widetilde{T}_-=a,\quad \widetilde{T}_+=ib^\dagger &\quad \mbox{for }D^+_j\\
\widetilde{T}_-=a^\dagger,\quad \widetilde{T}_+=ib &\quad \mbox{for }D^-_j
\end{cases}
\end{equation}
one gets
\begin{equation}
J_+=a^\dagger b^\dagger,\quad J_-=a b,\quad J_0=
\begin{cases}
\tfrac{1}{2}\left(a^\dagger a + b^\dagger b +1\right) &\quad \mbox{for }D^+_j\\
-\tfrac{1}{2}\left(a^\dagger a + b^\dagger b +1\right) &\quad \mbox{for }D^-_j.
\end{cases}
\end{equation}

The continuous series generators cannot be rewritten in terms of harmonic oscillators because, while
\begin{equation}
\braket{j+\tfrac{1}{2},m\pm \tfrac{1}{2}|T_\pm|j,m}=\braket{j,m|\widetilde{T}_\mp|j+\tfrac{1}{2},m\pm \tfrac{1}{2}},
\end{equation}
these matrix elements are never always real or imaginary, as that depends on the value of $m$. This is to be expected, as if the generators could be written in terms of harmonic oscillators, the Casimir element $Q$ would be expressible in terms of the \emph{number operators}
\begin{equation}
N_a=a^\dagger a,\quad N_b=b^\dagger b,
\end{equation}
which have discrete spectrum \cite{messiah1}: this contradicts the fact that the eigenvalues of $Q$ are continuous. 
Nevertheless, an analogue of the Jordan--Schwinger representation exists in this case. One should note that the commutation relations (\ref{eq:heisenberg_cr}) are still those of a Heisenberg algebra representation, where one of the generators acts as the identity. 
\section*{Concluding remarks}

The methods used to study the recoupling theory of finite and infinite-dimensional representations heavily relies on the particular Lie algebra being considered. Nevertheless, they can hopefully serve as a guide when considering more general algebras, e.g. $\mathfrak{sl}(2,\mathbb{C})_\mathbb{R}$; it is however likely that the same techniques will work with little modification with the deformed algebra $\mathcal{U}_q(\sltr)$. Both directions could be investigated in the future.

Regarding the Jordan-Schwinger representation, one question arises: the tensor operators used to construct the generators satisfy the commutation relations of a Heisenberg algebra, i.e. they form a representation of it. What kind of representation it is, and is it unitary or not? This aspect has not been considered in detail yet. The author leaves it for further investigations.

\begin{acknowledgments}
The author would like to thank Florian Girelli for introducing him to the topic and providing useful insights.
\end{acknowledgments}

\appendix
\section{Notations}\label{app:notations}
\begin{tabular}{lll}
\multicolumn{3}{l}{Braket notation}\\ \midrule
$\ket{\psi}$ & &	vector in an Hilbert space\\
$\braket{\varphi|\psi}$ & & inner product of $\ket\varphi$ and $\ket\psi$, antilinear in $\ket\varphi$\\
$A^\dagger$ && Hermitian adjoint of an operator $A$\\[1em]
\multicolumn{3}{l}{Representation theory}\\ \midrule
$\mathfrak{gl}(V)$	& & algebra of endomorphisms of a vector space $V$ \\
$\mathrm{Lin}(V_1,V_2)$	& &	vector space of linear maps $V_1\rightarrow V_2$\\
$V\otimes W$	& &	(orthogonal) tensor product of two vector spaces $V$, $W$\\
$V\oplus W$	& &	(orthogonal) direct sum of two vector spaces $V$, $W$\\[1em]
\multicolumn{3}{l}{Sets}\\ \midrule
$x+\mathbb{Z}$ & & set defined by $\left\{ x+z\in \mathbb{R} \left|\right. z\in\mathbb{Z} \right\}$, assuming $x\in \mathbb{R}$\\
$y\mathbb{Z}$ & & set defined by $\left\{ yz\in \mathbb{R} \left|\right. z\in\mathbb{Z} \right\}$, assuming $y\in \mathbb{R}$
\end{tabular}

\section{Some properties of the Clebsch--Gordan coefficients}
\label{app:CGprop}

Some properties of the Clebsch--Gordan coefficients will be listed here. Assume that the coupling $F_\gamma\otimes \rho_j$, with $\rho_j$ an arbitrary irreducible representation, is decomposable. Consider the Clebsch--Gordan coefficients in the form presented in section \ref{sec:WE}, that is such that the diagonalized basis vectors are
\begin{equation}
\label{eq:CG_recur0}
\ket{J,M}=\sum_{\mu\in\mathcal{M}_\gamma}\sum_{m\in\mathcal{M}_j}A(\gamma,\mu;j,m|J,M)\ket{\gamma,\mu;j,m},\quad J\in\mathcal{D}(\gamma,j),\quad M\in\mathcal{M}_J.
\end{equation}
One can always rescale these vectors so that
\begin{equation}
\mathcal{J_\pm}\ket{J,M}=C_\pm(J,M)\ket{J,M\pm 1}.
\end{equation}
By acting with $\mathcal{J_\pm}$ on both sides of (\ref{eq:CG_recur0}) and equating the coefficients of each basis vector one obtains that the Clebsch--Gordan coefficients must obey the recursion relation
\begin{equation}\label{eq:CG_recur1}
C_\pm(J,M)A(\gamma,\mu;j,m|J,M\pm 1)=
C_\pm(\gamma,\mu \mp 1)A(\gamma,\mu \mp 1;j,m|J,M) +
C_\pm(j,m\mp 1)A(\gamma,\mu;j,m\mp 1|J,M);
\end{equation}
analogously, one finds for the inverse coefficients
\begin{equation}\label{eq:CG_recur2}
C_\pm(J,M)B(J,M\pm 1|\gamma,\mu;j,m)=
C_\pm(\gamma,\mu \mp 1)B(J,M|\gamma,\mu \mp 1;j,m) +
C_\pm(j,m\mp 1)B(J,M|\gamma,\mu;j,m\mp 1).
\end{equation}
Since both the coefficients and their inverse, for each fixed $J$, are solutions the same homogeneous linear system, they must be proportional to each other: one can always choose their normalization so that
\begin{equation}
A(\gamma,\mu;j,m|J,M)=B(J,M|\gamma,\mu;j,m).
\end{equation}
Since the recursion relations only relate coefficients with the same $J$, one could think \emph{a priori} that coefficients with different $J$ are independent. It will be shown in the following that this is not true.

Consider the particular case of (\ref{eq:CG_recur1})
\begin{equation}
C_+(J,M)B(J,M+1|\gamma,-\gamma;j,m+1)=C_+(j,m)B(J,M|\gamma,-\gamma;j,m),
\end{equation}
where the fact that
\begin{equation}
C_+(\gamma,-\gamma-1)=0
\end{equation}
was used. By considering the same equation for $J+1$ and dividing by the first one, one obtains
\begin{equation}
D_J(M+1):=\frac{B(J+1,M+1|\gamma,-\gamma;j,m+1)}{B(J,M+1|\gamma,-\gamma;j,m+1)}=\frac{C_+(J,M)}{C_+(J+1,M)}\frac{B(J+1,M|\gamma,-\gamma;j,m)}{B(J,M|\gamma,-\gamma;j,m)},
\end{equation}
i.e.
\begin{equation}
D_J(M+1)=\frac{\sqrt{J-M}\sqrt{J+M+1}}{\sqrt{J-M+1}\sqrt{J+M+2}}D_J(M).
\end{equation}
One can readily see that, by recursion,
\begin{equation}
D_J(M+n)=\frac{\sqrt{J-M-n+1}\sqrt{J+M+1}}{\sqrt{J+M+n+1}\sqrt{J-M+1}}D_J(M),\quad n\in\mathbb{N}.
\end{equation}
From this one can infer that
\begin{equation}
D_J(M)\equiv \frac{B(J+1,M|\gamma,-\gamma;j,m)}{B(J,M|\gamma,-\gamma;j,m)}=\alpha(J)\frac{\sqrt{J-M+1}}{\sqrt{J+M+1}},
\end{equation}
where $\alpha$ is arbitrary and depends on the normalization. 
\section{Table of Clebsch--Gordan coefficients for \texorpdfstring{$\gamma=\frac{1}{2}$ and $\gamma=1$}{y=1/2 and y=1}}
\label{app:CGtable}
Explicit values for the Clebsch--Gordan coefficients are presented here, for the small values $\gamma=\frac{1}{2}$ (table \ref{tab:1/2}) and $\gamma=1$ (table \ref{tab:1}), with arbitrary $j$. The tables are valid for $D^\pm_j$, $C^\varepsilon_j$ and $F_j$, provided only the allowed values of $j$, $J$ and $M$ are considered (see \ref{sec:fin-dis} and \ref{sec:fin-cont}). The coefficients are normalized in such a way that
\begin{equation}
A(\gamma,\mu;j,m|J,M)=B(J,M|\gamma,\mu;j,m)
\end{equation}
and that, for the finite-dimensional series (with $j\geq \gamma$), they coincide with the $\mathfrak{su}(2)$ ones. Moreover, in analogy with the $\mathfrak{su}(2)$ case, the Clebsch--Gordan coefficients for the coupling $\rho_j\otimes F_\gamma\cong F_\gamma\otimes \rho_j$ are chosen to be
\begin{equation}
B(J,M|j,m;\gamma,\mu):=(-1)^{J-j-\gamma}B(J,M|\gamma,\mu;j,m).
\end{equation}
\begin{center}
\begin{minipage}[t]{0.8\textwidth}
\[
\begin{array}{lcc} \toprule
& J=j-\frac{1}{2} &  J=j+\frac{1}{2}  \\ \midrule

\mu=-\frac{1}{2}\quad &-\frac{\sqrt{j+M+\frac{1}{2}}}{\sqrt{2j+1}} & \frac{\sqrt{j-M+\frac{1}{2}}}{\sqrt{2j+1}} \\[0.8em]
\mu=+\frac{1}{2} & \frac{\sqrt{j-M+\frac{1}{2}}}{\sqrt{2j+1}} & \frac{\sqrt{j+M+\frac{1}{2}}}{\sqrt{2j+1}} \\ \bottomrule
\end{array}
\]
\captionof{table}{\label{tab:1/2}Clebsch--Gordan coefficients $B(J,M|\gamma,\mu;j,M-\mu)$ for $\gamma=\frac{1}{2}$.}
\[
\begin{array}{lccc} \toprule
&\quad J=j-1\quad &\quad  J=j\quad & \quad J=j+1 \quad\\ \midrule

\mu=-1\quad &\frac{\sqrt{j+M}\sqrt{j+M+1}}{\sqrt{2j}\sqrt{2j+1}}  &
-\sqrt{2}\frac{\sqrt{j-M}\sqrt{j+M+1}}{\sqrt{2j}\sqrt{2j+2}} & 
\frac{\sqrt{j-M}\sqrt{j-M+1}}{\sqrt{2j+1}\sqrt{2j+2}} \\[0.8em]

\mu=0 &-\sqrt{2}\frac{\sqrt{j-M}\sqrt{j+M}}{\sqrt{2j}\sqrt{2j+1}} & -\frac{2M}{\sqrt{2j}\sqrt{2j+2}} &
\sqrt{2}\frac{\sqrt{j-M+1}\sqrt{j+M+1}}{\sqrt{2j+1}\sqrt{2j+2}}\\[0.8em]

\mu=+1 & \frac{\sqrt{j-M}\sqrt{j-M+1}}{\sqrt{2j}\sqrt{2j+1}} & 
\sqrt{2} \frac{\sqrt{j+M}\sqrt{j-M+1}}{\sqrt{2j}\sqrt{2j+2}}&
\frac{\sqrt{j+M}\sqrt{j+M+1}}{\sqrt{2j+1}\sqrt{2j+2}} \\ \bottomrule
\end{array}
\]
\captionof{table}{\label{tab:1}Clebsch--Gordan coefficients $B(J,M|\gamma,\mu;j,M-\mu)$ for $\gamma=1$.}
\end{minipage}
\end{center}

\section{Tridiagonal matrices}\label{app:tridiagonal}
A \emph{tridiagonal} matrix is a square matrix whose only non-zero entries are on the diagonal, the \emph{subdiagonal}~---~consisting of the entries directly below and to the left of the diagonal~---~and the \emph{superdiagonal}~---~consisting of the entries directly above and to the right of the diagonal. They can be visualized as 
\begin{equation}
A=
\begin{pmatrix}
b_1 & c_1\\
a_2 & b_2 & c_2\\
& \ddots & \ddots & \ddots\\
&& a_{n-1} &b_{n-1} & c_{n-1}\\
& & & a_n & b_n
\end{pmatrix},
\end{equation}
with the generic entry given by
\begin{equation}
\label{eq:tridiagonal}
A_{ij}=\delta_{i-1,j}\, a_i + \delta_{i,j}\,b_i + \delta_{i+1,j}\, c_i,
\end{equation}
where it is understood that
\begin{equation}
a_1=0 \quad \mbox{and}\quad c_n=0.
\end{equation}
A result holding for a certain class of tridiagonal matrices will be proved here.
\begin{appxprop}
Let $A$ be a $n\times n$ tridiagonal matrix over a field $\mathbb{K}$, with all non-zero superdiagonal (subdiagonal) entries; then, for any $\lambda\in\mathbb{K}$, the kernel of
\begin{equation}
A-\lambda\mathbbm{1}
\end{equation}
is at most 1-dimensional.
\end{appxprop}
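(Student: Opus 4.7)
The plan is to exploit the structural fact that, for a tridiagonal matrix with nowhere-vanishing superdiagonal, a kernel vector is completely determined by its first component via a two-term recursion. Note that $A-\lambda\mathbbm{1}$ is itself tridiagonal, with the same off-diagonal entries as $A$, so its superdiagonal entries are still all non-zero. Hence without loss of generality one can prove the statement for the kernel of $A$ itself, assuming $c_i\neq 0$ for all $i\in\{1,\dotsc,n-1\}$.

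Let $v=(v_1,\dotsc,v_n)^{\transpose}\in \ker A$. Writing out the equation $Av=0$ row by row, using (\ref{eq:tridiagonal}) with the convention $v_0=v_{n+1}=0$, one obtains
\begin{equation}
a_i v_{i-1} + b_i v_i + c_i v_{i+1} = 0, \qquad i\in\{1,\dotsc,n\}.
\end{equation}
For each $i\in\{1,\dotsc,n-1\}$ the assumption $c_i\neq 0$ allows solving for $v_{i+1}$, giving the two-term recursion
\begin{equation}
v_{i+1}=-\frac{1}{c_i}\left(a_i v_{i-1}+b_i v_i\right).
\end{equation}
Starting from $v_0=0$ and an arbitrary $v_1$, this recursion determines $v_2,v_3,\dotsc,v_n$ uniquely. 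Thus the map $\ker A\rightarrow\mathbb{K}$, $v\mapsto v_1$, is injective, and hence $\dim\ker A\leq 1$.

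The last row of the equation $Av=0$, namely $a_n v_{n-1}+b_n v_n=0$, then plays no role in determining the candidate vector; it merely imposes a consistency condition that decides whether the kernel is trivial or genuinely one-dimensional (this, of course, is what corresponds to $\lambda$ being an eigenvalue of the original $A$ in the statement). For the dual hypothesis in which it is the subdiagonal entries that are all non-zero, the same argument applies to the transpose $A^{\transpose}$, whose kernel has the same dimension as that of $A$; alternatively one runs the recursion backwards, starting from $v_{n+1}=0$ and $v_n$ arbitrary. There is no real obstacle here: the content is purely that forward substitution in a tridiagonal system with non-zero off-diagonal entries is a one-parameter affair, so the corollary needed in the main text (one-dimensional eigenspaces, hence diagonalizability equivalent to having $n$ distinct eigenvalues) follows immediately.
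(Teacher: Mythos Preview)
Your proof is correct and follows essentially the same forward-substitution argument as the paper: both show that the first component of a kernel vector determines all the others via the recursion afforded by the non-vanishing superdiagonal entries, with the last row acting only as a consistency condition, and both handle the subdiagonal case by passing to the transpose. Your version is slightly more streamlined in absorbing $\lambda$ into the matrix at the outset and phrasing the conclusion as injectivity of $v\mapsto v_1$, but the content is the same.
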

\begin{proof}
First consider the superdiagonal case. Fix an arbitrary $\lambda\in\mathbb{K}$; the kernel of~$A-\lambda\mathbbm{1}$~is the vector space of solutions to the equation
\begin{equation}
Ax=\lambda x,\qquad x\in \mathbb{K}^n,
\end{equation}
which, with the notation introduced in eq. (\ref{eq:tridiagonal}), is equivalent to the system of equations
\begin{equation}
\begin{cases}
\left(b_1-\lambda\right)x_1 + c_1\,x_2 = 0\\
a_i\, x_{i-1} + \left(b_i-\lambda\right)x_i + c_i\,x_{i+1}=0,\qquad i=2,\dotsc,n-1\\
a_n\,x_{n-1} + \left(b_n-\lambda\right)x_n=0.
\end{cases}
\end{equation}
If $x_1$ is zero, the first equation reduces to
\begin{equation}
c_1\,x_2=0,
\end{equation}
which, since all $c$'s are non-vanishing, implies $x_2$ is zero as well. In general, the $k$th equation will be
\begin{equation}
c_k\,x_{k+1}=0,
\end{equation}
so that $x$ must necessarily be the zero vector.

Now let $x_1$ be an arbitrary non-zero value. By plugging each equation in the next one, the first $n-1$ equations will give a system of equations of the form
\begin{equation}
\label{eq:tridiagonal_solution}
c_i\,x_{i+1}=\alpha_{i+1}\,x_1,\qquad i=1,\dotsc,n-1,
\end{equation}
with each $\alpha$ depending solely on $\lambda$ and on the matrix entries.
These always have solution, since one can safely divide by the $c$'s; as a consequence, the solution is \emph{completely} specified by the value of $x_1$, which can be factored out as a scalar coefficient. The $n$th equation, together with the $(n-2)$th one, will give
\begin{equation}
\left(b_n-\lambda\right)x_n=-\frac{a_n\,\alpha_{n-1}}{c_{n-2}}\,x_1;
\end{equation}
a non-zero eigenvector exists if and only if the solution to this equation is consistent with the others. By virtue of equations (\ref{eq:tridiagonal_solution}), all such solutions are proportional to each other, thus
\begin{equation}
\mathrm{dim}\ker\left(A-\lambda\mathbbm{1}\right)\leq 1.
\end{equation}
As for the subdiagonal case, it can be reduced to the superdiagonal one by working with the transpose matrix $A^\transpose$. From the \emph{fundamental theorem of linear algebra} follows that, for square matrices,
\begin{equation}
\ker\left(A^\transpose-\lambda\mathbbm{1}\right)\equiv\mathrm{coker}\left(A-\lambda\mathbbm{1}\right)\cong \ker\left(A-\lambda\mathbbm{1}\right),
\end{equation}
so that, again, the kernel is at most $1$-dimensional.
\end{proof}
It trivially follows that

\begin{appxcoroll}\label{coroll:tridiagonal_eigenspaces}
The eigenspaces of a tridiagonal matrix over a field $\mathbb{K}$, whose superdiagonal (subdiagonal) entries are all non-zero, are all $1$-dimensional.
\end{appxcoroll}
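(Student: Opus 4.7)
The plan is to obtain the corollary as an immediate consequence of Proposition A1, which has already done all the real work. Recall that, given an eigenvalue $\lambda \in \mathbb{K}$ of the tridiagonal matrix $A$, the corresponding eigenspace is by definition
\[
E_\lambda = \ker(A - \lambda\mathbbm{1}),
\]
and it is non-trivial precisely because $\lambda$ is an eigenvalue (so there exists at least one non-zero eigenvector).

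First, I would invoke Proposition A1 directly in the superdiagonal case: it asserts that under the hypothesis that all superdiagonal entries of $A$ are non-zero, $\ker(A - \lambda\mathbbm{1})$ has dimension at most $1$ for every $\lambda \in \mathbb{K}$. Combining this with the fact that $E_\lambda$ is non-trivial whenever $\lambda$ is an eigenvalue yields $\dim E_\lambda = 1$. The subdiagonal case is treated inside the proof of the proposition itself, via the identification $\ker(A^{\transpose} - \lambda\mathbbm{1}) \cong \ker(A - \lambda\mathbbm{1})$ coming from the fundamental theorem of linear algebra, so no additional work is needed here either.

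There is essentially no obstacle: the corollary is purely a matter of reading off the conclusion of Proposition A1 and interpreting the kernel as an eigenspace. The only thing worth being careful about is stating explicitly that the hypothesis (all non-zero entries on one of the two off-diagonals) is preserved when we pass to the transpose, so the same dimension bound applies in both cases. The proof should therefore be no more than a few lines.
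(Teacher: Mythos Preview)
Your proposal is correct and matches the paper's approach exactly: the paper simply states that the corollary ``trivially follows'' from the proposition, which is precisely the argument you spell out (eigenspaces are kernels, the proposition bounds the dimension above by one, and being an eigenvalue bounds it below by one).
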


\bibliography{bibliography}
\end{document}